\theoremstyle{definition}
\theoremstyle{definition}
\newtheorem{definition}{Definition}
\theoremstyle{definition}
\newtheorem{claim}{Claim}
\begin{document}

\title{Color code decoder with improved scaling for correcting circuit-level noise}

\author{Seok-Hyung Lee}
\email{seokhyung.lee@sydney.edu.au}
\affiliation{Centre for Engineered Quantum Systems, School of Physics, The University of Sydney, Sydney, NSW 2006, Australia}
\author{Andrew Li}
\affiliation{Centre for Engineered Quantum Systems, School of Physics, The University of Sydney, Sydney, NSW 2006, Australia}
\author{Stephen D. Bartlett}
\email{stephen.bartlett@sydney.edu.au}
\affiliation{Centre for Engineered Quantum Systems, School of Physics, The University of Sydney, Sydney, NSW 2006, Australia}

\begin{abstract}

Two-dimensional color codes are a promising candidate for fault-tolerant quantum computing, as they have high encoding rates, transversal implementation of logical Clifford gates, and resource-efficient magic state preparation schemes.
However, decoding color codes presents a significant challenge due to their structure, where elementary errors violate three checks instead of just two (a key feature in surface code decoding), and the complexity of extracting syndrome is greater.
We introduce an efficient color-code decoder that tackles these issues by combining two matching decoders for each color, generalized to handle circuit-level noise by employing detector error models.
We provide comprehensive analyses of the decoder, covering its threshold and sub-threshold scaling both for bit-flip noise with ideal measurements and for circuit-level noise.
Our simulations reveal that this decoding strategy nearly reaches the best possible scaling of logical failure ($p_\mr{fail} \sim p^{d/2}$) for both noise models, where $p$ is the noise strength, in the regime of interest for fault-tolerant quantum computing.
While its noise thresholds are comparable with other matching-based decoders for color codes ($8.2\%$ for bit-flip noise and $0.46\%$ for circuit-level noise), the scaling of logical failure rates below threshold significantly outperforms the best matching-based decoders.

\end{abstract}

\maketitle

\section{Introduction}

Two-dimensional (2D) color codes \cite{bombin2006topological,bombin2013topological} are a family of stabilizer quantum error-correcting codes that can be realized with local interactions on a 2D plane, and provide a promising pathway to implementing fault-tolerant quantum computation.  
Compared to surface codes \cite{bravyi1998quantum,dennis2002topological}, color codes have several noteworthy advantages: (i) They have higher encoding rates for the same code distance \cite{landahl2011faulttolerant}, (ii) all the Clifford gates can be implemented transversally \cite{bombin2006topological}, and (iii) an arbitrary pair of commuting logical Pauli product operators can be measured in parallel via lattice surgery \cite{thomsen2024lowoverhead}.
Notably, the transversality of Clifford gates enables highly resource-efficient magic state preparation schemes without distillation \cite{chamberland2020very,itogawa2024even,gidney2024magic}, which can be concatenated with distillation to further improve the quality of the output magic state \cite{hirano2024leveraging,lee2024low}.

Recent experiments have demonstrated significant advancements in the implementation of color codes.
The distance-3 Steane code, a well-studied small instance of a color code, has been used to perform various fault-tolerant operations (such as transversal logic gates, magic state preparation, and lattice surgery) on trapped-ion \cite{postler2022demonstration,ryananderson2022implementing,postler2024demonstration}, neutral atom \cite{bluvstein2024logical,rodriguez2024experimental}, and superconducting \cite{lacroix2024scaling} platforms.
Furthermore, Refs.~\cite{rodriguez2024experimental,lacroix2024scaling} also implement the distance-5 color code, with Ref.~\cite{lacroix2024scaling} demonstrating its improved logical error suppression compared to the Steane code.

For these desirable features to be exploited for fault-tolerant quantum computing in practice, we need better decoders for color codes.  
Surface codes benefit from the many advantages of a decoding approach based on `matching', which is a standard method to handle errors in codes that can only have \emph{edge-like} errors (namely, elementary errors violate at most two checks).  
Matching-based decoders can operate both efficiently and near-optimally, and are readily adapted to handle noisy syndrome extraction circuits. 
In color codes, an elementary error, which is a single-qubit $X$ or $Z$ error, is generally involved in three checks (or stabilizer generators), and so a matching decoder cannot directly be used.
Moreover, considering realistic circuit-level noise makes decoding more difficult because the color code syndrome extraction circuits are more complex than for the surface code. 
As a result of these deficiencies, existing decoders for the color code do not perform as well as expected either in terms of error thresholds or for sub-threshold scaling of the logical failure rate.

Several approaches to decode errors in color codes have been proposed.  The most widely studied methods are the projection decoder and its variants \cite{delfosse2014decoding,chamberland2020triangular,beverland2021cost,kubica2023efficient,zhang2024facilitating}, which apply minimum-weight perfect matching (MWPM) on two or three restricted lattices (which are specific sub-lattices of the dual lattice of the color code) and then deduce the final correction by a procedure called `lifting'.
These approaches can achieve thresholds of around 8.7\% for bit-flip noise \cite{delfosse2014decoding} and around 0.47\% for circuit-level noise \cite{zhang2024facilitating}.
However, they have a fundamental limitation that the logical failure rate $p_\mr{fail}$ scales like $p^{d/3}$ below threshold \cite{beverland2021cost,sahay2022decoder,zhang2024facilitating}, not $p^{d/2}$, where $p$ is a physical noise strength and $d$ is the code distance.
This is because there exist errors with weights $O(d/3)$ that are uncorrectable via the decoder (noting that, in principle, an optimal decoder can correct any error with weight up to $d/2$).
Such a drawback may significantly hinder resource efficiency of quantum computing, necessitating a larger code distance to maintain the same $p_\mr{fail}$, compared to a scenario using a decoder with the expected optimal scaling of $p^{d/2}$.
Roughly estimating based on the ansatz $p_\mr{fail} = \alpha (p/p_\mathrm{th})^{\beta d + \eta}$ for the threshold $p_\mathrm{th}$ and parameters $\alpha$, $\beta$, and $\eta$, using a decoder with $\beta = 1/3$ demands about $9/4$ times more qubits than an optimal decoder with $\beta=1/2$ to achieve a similar $p_\mr{fail}$, assuming $p_\mathrm{th}$, $\alpha$, and $\eta$ remain similar in both cases.

The Möbius MWPM decoder \cite{sahay2022decoder} is another matching-based decoder implemented by applying the MWPM algorithm on a manifold built by connecting the three restricted lattices, which has the topology of a Möbius strip.
It achieves a higher threshold of 9.0\% under bit-flip noise and, more importantly, a better scaling of $p_\mr{fail} \sim p^{3d/7}$.
The decoder has subsequently been improved to accommodate circuit-level noise and general color-code lattices \cite{gidney2023new}.

Besides these matching-based decoders, there are tensor network decoders (achieving a threshold of 10.9\% under bit-flip noise) \cite{chubb2021general}, simulated annealing (10.4\%) \cite{takada2024ising}, MaxSAT problem (10.1\%) \cite{berent2024decoding}, trellis (10.1\%) \cite{sabo2022trellis}, neural network (10.0\%) \cite{maskara2019advantages}, union-find (8.4\%) \cite{delfosse2021almostlinear}, and renormalization group (7.8\%) \cite{sarvepalli2012efficient}.  
All of these reported thresholds are for the 6-6-6 (or hexagonal) color-code lattice except that of the decoder based on simulated annealing \cite{takada2024ising}, which is for the 4-8-8 lattice.
However, it is currently unclear how these decoders can be adapted to circuit-level noise and how their performance will be.
We note that the tensor network decoder can be adapted to treat noisy syndrome measurements in a phenomenological or circuit-level noise model, but it is highly inefficient due to the difficulty of 3D tensor network contraction \cite{piveteau2024tensor}.

In this work, we propose a matching-based color-code decoder, which we call the \emph{concatenated MWPM decoder}, that demonstrates exceptional sub-threshold scaling of the logical failure rate in regimes of interest for fault-tolerant quantum computing.  Our decoder functions by `concatenation' of two MWPM decoders per color, for a total of six matchings.
Roughly speaking, this decoder is based on the idea that decoding syndrome on a specific (say, red) restricted lattice returns a prediction of red edges with odd-parity errors, which can be combined with the syndrome data of red checks (specifying red faces with odd-parity errors) and decoded again to predict errors.
This process is repeated for each of the three colors and the most probable prediction is selected as the final prediction.
We demonstrate that this decoder can be generalized to handle circuit-level noise by using the concept of \emph{detector error model}.
This approach to color-code decoding was inspired by a decoding approach for measurement-based quantum computing~\cite{lee2022universal}, and here we develop and generalize it further for gate-based quantum computing.

We analyze the performance of the concatenated MWPM decoder against bit-flip and circuit-level noise models by evaluating its noise thresholds and investigating the sub-threshold scaling of logical failure rates.
Notably, the logical failure rates of the concatenated MWPM decoder below threshold is shown to be well-described by the scaling $p_\mr{fail} \sim p^{d/2}$ for both of the noise models within the range of our simulations ($d \leq 31$ for bit-flip noise and $d \leq 21$ for circuit-level noise).
Thanks to this improvement, although it has comparable or slightly lower thresholds (8.2\% for bit-flip noise and 0.46\% for circuit-level noise), its sub-threshold performance significantly surpasses the projection decoder.
Compared to the Möbius decoder, our decoder has a similar scaling factor against $d$ but achieves approximately 3--7 times lower logical failure rates for circuit-level noise when $10^{-4} \lessapprox p \lessapprox 5 \times 10^{-4}$.

A python module for simulating color code circuits and running the concatenated MWPM decoder is publicly available on Github \cite{colorcodestim}.

This paper is structured as follows:
In Sec.~\ref{sec:preliminaries}, we provide a brief overview of color codes and their decoding problem, and describe our methodology of analysis including the settings, noise models, and criteria for evaluating decoder performance.
In Sec.~\ref{sec:decoder_bit_flip}, we introduce the 2D variant of the concatenated MWPM decoder that works when syndrome measurements are perfect and analyze it numerically for bit-flip noise.
In Sec.~\ref{sec:decoder_circuit_level}, we generalize the decoder by using detector error models to accommodate circuit-level noise including faulty syndrome measurements and present the outcomes of its numerical analyses as well.
We conclude with final remarks in Sec.~\ref{sec:remarks}.

\section{Preliminaries \label{sec:preliminaries}}

\subsection{Color codes}

\begin{figure}[!t]
    \centering
    \includegraphics[width=\linewidth]{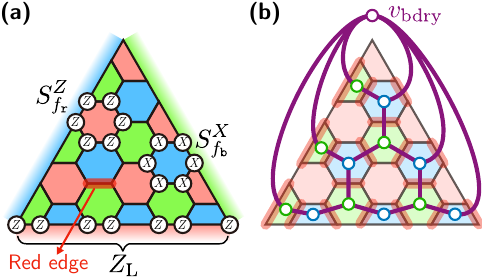}
    \caption{
        \textbf{Example of a triangular color code and its restricted lattice.}
        \subfig{a} The triangular color code with code distance $d=7$ based on a hexagonal lattice with three boundaries of different colors is displayed.
        Each face is associated with a pair of $Z$-type and $X$-type checks, exemplified by $\sgz{f_\rbs}$ on a red face and $\sgx{f_\bbs}$ on a blue face, respectively. 
        Logical operators $X_L$ and $Z_L$ can be supported on one of the three boundaries. 
        Each edge is colored by the color of the faces connected by it, as exemplified by the red edge highlighted.
        \subfig{b} The red-restricted lattice of the code is illustrated, where vertices and edges are shown as circles and purple lines, respectively.
        Its vertices consist of blue faces (blue circles), green faces (green circles), and an additional boundary vertex $v_\mr{bdry}$ (purple circle).
        Each of its edges corresponds to a red edge of the original lattice, which are highlighted as thick red lines.
    }
    \label{fig:color_code}
\end{figure}

A 2D color-code lattice indicates a lattice satisfying the following two conditions:
\begin{itemize}
    \item \textit{3-valent}: Each vertex is connected with three edges.
    \item \textit{3-colorable}: One of three colors can be assigned to each face in a way that adjacent faces do not have the same color.
\end{itemize}
Following the usual convention, we use red (\rbs), green (\gbs), and blue (\bbs) as these three colors assigned to faces.
Note that each edge is also colorable by the color of the faces that it connects.
There are various types of color-code lattices, among which hexagonal (or 6-6-6) lattices are used to test the concatenated MWPM decoder in this work.
Nonetheless, since our decoder is described in a form that does not depend on the lattice type, it can also be applied to other color-code lattices such as 4-8-8 lattices \cite{fowler2011twodimensional}.

Given a 2D color-code lattice, a 2D color code \cite{bombin2006topological} is defined on qubits placed at the vertices of the lattice.
The code space of the color code is stabilized by two types of \emph{checks} $\sgx{f}$ and $\sgz{f}$ for each face $f$, which are defined as
\begin{align}
    \sgx{f} \coloneqq \prod_{v \in f} \Xv{v}, \qquad \sgz{f} \coloneqq \prod_{v \in f} \Zv{v}, \label{eq:checks_def}
\end{align}
where $\Xv{v}$ and $\Zv{v}$ are the Pauli-X and Z operators on the qubit placed at $v$ and the notation `$v \in f$' means that the vertex $v$ is included in $f$.
In other words, the code space is the common $+1$ eigenspace of these operators.
We categorize checks according to their Pauli types and the colors of the corresponding faces; for example, if $f$ is a red face, $\sgx{f}$ ($\sgz{f}$) is a red $X$-type ($Z$-type) check.

To define a single logical qubit with a color code, we can use a triangular patch possessing three boundaries of different colors, as shown in Fig.~\ref{fig:color_code}(a) for the code distance of $d=7$ with several examples of its checks and logical-$Z$ operator.
Here, we say that a boundary has a specific color (e.g., red) if and only if it is adjacent to only faces of other two colors (e.g., green and blue).
In addition, an edge connecting a boundary and a face is regarded to have the same color as them.
The logical Pauli-$X$ ($Z$) operator, denoted as $\ov{X}$ ($\ov{Z}$), is defined as the product of $X$ ($Z$) operators on qubits placed along one of the three boundaries.
The weight of each of these logical Pauli operators is the code distance of the code.

\subsection{Decoding problem}

For detecting and correcting errors, the checks of Eq.~\eqref{eq:checks_def} are measured, repeatedly.
These measurements consist of multiple \textit{rounds}, each round being a subroutine to measure all the commuting checks in parallel without duplication (see Sec.~\ref{subsec:circuit} for the circuit implementing this).
Each measurement outcome of a check is referred to as a \emph{check outcome} and, if it is $-1$, we say that the check is \emph{violated}.
The collection of check outcomes of a single round is called a \emph{syndrome}.

\emph{Decoding} is the process of predicting errors from syndromes of a single or multiple rounds.
If the residual errors after decoding incurs a logical error, we say that the decoding fails.
Ideal decoding using maximum likelihood is generally known to be \textsf{\#P-complete} \cite{iyer2015hardness}, thus alternative decoders that are less precise but more efficient are necessary for practical quantum computing.
For Calderbank-Shor-Steane (CSS) codes where $X$-type and $Z$-type checks are separately defined, these two types of checks can be decoded independently by regarding $Y$ errors as combinations of $X$ and $Z$ errors.

Assuming that syndrome measurements are perfect, meaning each check outcome is obtained without error, the minimum-weight perfect matching (MWPM) algorithm \cite{edmonds1965paths} can be used directly for decoding Pauli errors on data qubits when each of elementary Pauli errors (which typically consist of the Pauli-$X$ and $Z$ operators of all data qubits) anticommutes with at most two checks of the code. 
This is the situation for surface codes \cite{bravyi1998quantum}.
In such a case, we consider a \emph{matching graph} $G_\mr{mat} = (V, E)$ whose vertices $V$ consist of checks and an additional `boundary vertex' $v_\mr{bdry}$.
For each elementary error $P$, two checks anticommuting with $P$ (or $v_\mr{bdry}$ and a single check anticommuting with $P$) are connected within $G_\mr{mat}$.
A weight is assigned to each edge either uniformly or as $\log[(1-q)/q]$, where $q$ is the probability of the corresponding error.
Denoting the set of violated checks as $\sigma$, the MWPM algorithm identifies a minimal-total-weight set of edges, denoted as $\MWPM{G_\mr{mat}}{v_\mr{bdry}}{\sigma}$, that meet each violated check an odd number of times and each unviolated check an even number of times.
In other words, $E_\mr{MWPM} \coloneqq \MWPM{G_\mr{mat}}{v_\mr{bdry}}{\sigma} \subset E$ satisfies
\begin{align*}
    &\forall v \in \sigma, \; \abs{\qty{e \in E_\mr{MWPM} \mid v \in e}} \equiv 1 \pmod{2}, \\ 
    &\forall v \in V \setminus \qty(\sigma \cup \qty{v_\mr{bdry}}), \; \\
    &\qquad \abs{\qty{e \in E_\mr{MWPM} \mid v \in e}} \equiv 0 \pmod{2}
\end{align*}
and have the smallest sum of weights.
The final prediction is then the set of elementary errors corresponding to the edges in $E_\mr{MWPM}$.

However, the above method is not applicable to color codes since each of the Pauli-$X$ and $Z$ operators of their physical qubits can anticommute with three checks.
To obviate this problem, we can instead consider three \emph{restricted lattices} visualized in Fig.~\ref{fig:color_code}(b), which are derived from the color-code lattice $\LatticeTwoD$ as follows:
The red-restricted lattice $\LatticeRest{\rbs}$ contains the green and blue faces and an additional `boundary vertex' $v_\mr{bdry}$ as its vertices and, for each red edge $e^{(\rbs)}$ of $\LatticeTwoD$, the green and blue faces divided by $e^{(\rbs)}$ (or $v_\mr{bdry}$ and $f$ if $e^{(\rbs)}$ belongs to only one face $f$) are connected within $\LatticeRest{\rbs}$ by an edge.\footnote{In the literature, the restricted lattice is more often defined to have two boundary vertices (connected with each other), which respectively correspond to two among the three boundaries except the boundary of the restricted color. Although it may be mathematically more natural, it is not necessary for MWPM since the edge between the two boundary vertices is regarded to have zero weight. See Appendix~\ref{subapp:boundary_consideration} for more details.}
The green- and blue-restricted lattices $\LatticeRest{\gbs}$, $\LatticeRest{\bbs}$ are defined analogously.
Importantly, any single-qubit $X$ and $Z$ errors respectively affect checks on at most two vertices in each restricted lattice.
Therefore, we can apply MWPM on (all or some of) these three restricted lattices individually and combine these outcomes through a specific method, which is the basic idea of the projection decoder \cite{delfosse2014decoding}.
Alternatively, MWPM can be performed on a manifold built by connecting the three restricted lattices appropriately as in the Möbius MWPM decoder \cite{sahay2022decoder}.
In this work, we take another approach to perform MWPM twice in a concatenated way, where the first one is applied on a restricted lattice but the second one is applied on another useful lattice derived from $\LatticeTwoD$ in a different way.

We note that we have assumed perfect syndrome extraction in the above discussions.
If syndrome extraction is noisy, we can generally employ detector error models \cite{gidney2021stim}, which will be elaborated on in Sec.~\ref{sec:decoder_circuit_level}.

\subsection{Noise models and settings \label{subsec:settings}}

We consider one of the two types of noise models: bit-flip and circuit-level noise models.
In the bit-flip noise model of strength $p$, every data qubit undergoes an $X$ error with probability $p$ at the start of each round.
There are no errors in other steps including initialization and measurement.
A more sophisticated noise model is the circuit-level noise model of strength $p$, defined as follows:
\begin{itemize}
    \item Every measurement outcome is flipped with probability $p$.
    \item Every preparation of a qubit produces an orthogonal state with probability $p$.
    \item Every single- or two-qubit unitary gate (including the idle gate $I$) is followed by a single- or two-qubit depolarizing noise channel of strength $p$. We here regard that, for every time slice of the circuit, idle gates $I$ are acted on all the qubits that are not involved in any non-trivial unitary gates or measurements.
\end{itemize}
Here, the single- and two-qubit depolarizing channels of strength $p$ are respectively defined as
\begin{align*}
    &\depchannelone{p}:\; \rho^{(1)} \mapsto (1 - p) \rho^{(1)} + \frac{p}{3} \sum_{P \in \qty{X, Y, Z}} P \rho^{(1)} P, \\ 
    &\depchanneltwo{p}:\; \rho^{(2)} \mapsto (1 - p) \rho^{(2)} + \frac{p}{15} \\ 
    &\qquad\quad \times \sum_{\substack{P_1, P_2 \in \qty{I, X, Y, Z} \\ P_1 \otimes P_2 \neq I \otimes I}} \qty(P_1 \otimes P_2) \rho^{(2)} \qty(P_1 \otimes P_2),
\end{align*}
where $\rho^{(1)}$ and $\rho^{(2)}$ are arbitrary single- and two-qubit density matrices, respectively.

To evaluate the performance of our decoder, we simulate $T$ rounds of the logical idling gate acting on the triangular color code of code distance $d$, which is preceded by logical initialization (to the $+1$ eigenstate $\ket{\ov{0}}$ of $\ov{Z}$) and followed by a measurement in the $\ov{Z}$ basis.
The logical initialization and measurement are done by initializing all the data qubits to $\ket{0}$ and measuring them in the $Z$ basis, respectively.
From this scenario, we can identify whether the $\ov{Z}$ observable fails (i.e., a $\ov{X}$ or $\ov{Y}$ error occurs) during the idling gate.
While it is sufficient for bit-flip noise models, the $\ov{X}$ observable also can fail under circuit-level noise models.
To cover this, we simulate again by swapping the $X$- and $Z$-type parts from the \cnot schedule (see Sec.~\ref{subsec:optimizing_cnot_schedule} for more details).
We estimate the logical failure rate $p_\mr{fail}$ as $p_\mr{fail} = p_\mr{fail}^{(X)} + p_\mr{fail}^{(Z)}$, where $p_\mr{fail}^{(X)}$ and $p_\mr{fail}^{(Z)}$ are respectively the failure rates of the $\ov{X}$ and $\ov{Z}$ observables.
Since a $\ov{Y}$ error causes both observables to fail, this provides a conservative estimate that upper-bounds the actual failure rate.

\subsection{Assessment of decoder performance}

We assess the performance of the decoder from two perspectives: (i) what the noise threshold $\CrossThrs{}$ of the decoder is and (ii) how the logical failure rate behaves when the noise strength $p$ is sufficiently lower than the threshold (namely, $p \lessapprox \CrossThrs{}/2$).

For a given value of $T$, the (cross) noise threshold $\CrossThrs{}(T)$ is defined by the noise strength at which the curves of the logical failure rates $p_\mr{fail}$ for different code distances (which are sufficiently large) cross each other.
We also define the \emph{long-term cross threshold} as $\CrossThrsLT{} \coloneqq \lim_{T \rightarrow \infty} \CrossThrs{}(T)$ and predict it by fitting data into the ansatz
\begin{align}
    \CrossThrs{}(T) = \CrossThrsLT{} \qty[1 - \qty(1 - \frac{\CrossThrs{}(1)}{\CrossThrsLT{}}) T^{-\gamma}],
    \label{eq:long_term_thrs_ansatz}
\end{align}
following the method used in Ref.~\cite{beverland2021cost}.

\begin{figure*}[!t]
    \centering
    \includegraphics[width=\linewidth]{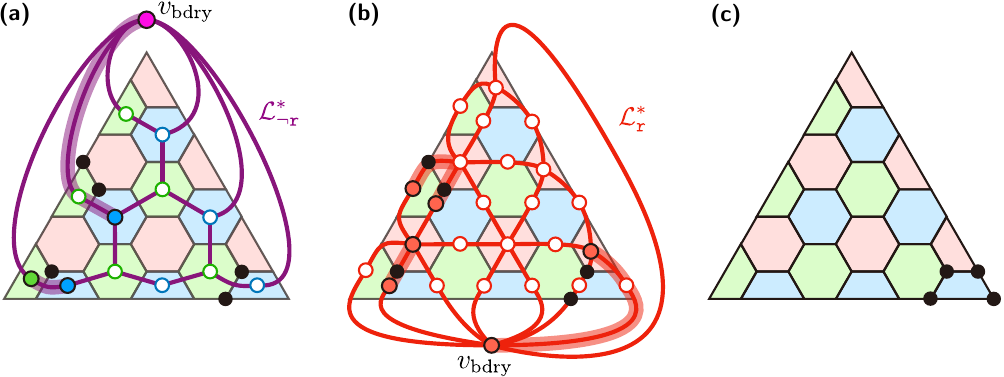}
    \caption{
        \textbf{Example of executing the concatenated MWPM decoder on the triangular color code of distance 7.}
        \subfig{a} First-round MWPM is performed on the red-restricted lattice $\LatticeRest{\rbs}$, where the vertices are the green and blue faces of the original lattice $\LatticeTwoD$ (marked as green and blue empty/filled circles) and a boundary vertex $v_\mr{bdry}$ (marked as a purple filled circle) connected by purple solid lines.
        The green/blue faces with violated checks are highlighted as green/blue filled circles.
        The obtained matching is drawn as purple thick lines, which correspond to a set of red edges $E_\mr{pred}^{(\rbs)}$.
        \subfig{b} Second-round MWPM is performed on the red-only lattice $\LatticeMono{\rbs}$, where the vertices are the red faces and edges of $\LatticeTwoD$ and a boundary vertex $v_\mr{bdry}$ (marked as red empty/filled circles) connected by red lines.
        The red edges in $E_\mr{pred}^{(\rbs)}$ and the red faces with violated checks are highlighted as red filled circles.
        The obtained matching is drawn as red thick lines, which corresponds to a set of vertices $V_\mr{pred}^{(\rbs)}$ predicted to have errors.
        \subfig{c} Residual errors after correcting $V_\mr{pred}^{(\rbs)}$ are marked as black dots, which is equivalent to a stabilizer.
        The above process is repeated two more times while varying colors and the smallest one among $V_\mr{pred}^{(\rbs)}$, $V_\mr{pred}^{(\gbs)}$, and $V_\mr{pred}^{(\bbs)}$ is selected as the outcome.
    }
    \label{fig:decoder_2D}
\end{figure*}

To investigate the sub-threshold scaling, we fit data (with sufficiently low $p$) into the ansatz\footnote{
    This ansatz, which is widely employed in the literature \cite{fowler2013surface,fowler2013analytic,fowler2019low,litinski2019game,litinski2019magic,sahay2022decoder}, can be justified as follows:
    For sufficiently low $p$, $p_\mr{fail}$ can be approximated by leaving only its lowest-order term with respect to $p$ (with no constant term, since $p_\mr{fail} = 0$ when $p = 0$). Furthermore, below the threshold, $p_\mr{fail}$ is expected to decrease exponentially with $d$, as proven for toric and surface codes \cite{dennis2002topological,raussendorf2007topological,fowler2012proof,watson2014logical} and supported numerically for color codes \cite{beverland2021cost,sahay2022decoder}.
    Hence, for sufficiently low $p$, it is natural to consider an ansatz where $\log p_\mr{fail}$ is bilinear in $\log p$ and $d$, which leads to the form of Eq.~\eqref{eq:pfail_ansatz}.
}
\begin{align}
    \frac{p_\mr{fail}}{T} = \alpha\qty(\frac{p}{\ScalingThrs{}})^{\beta \qty(d - d_0) + \eta}
    \label{eq:pfail_ansatz}
\end{align}
with five parameters $\ScalingThrs{}$, $\alpha$, $\beta$, $\eta$ and $d_0$, where $d_0$ is a number determined appropriately to minimize the uncertainties of $\alpha$ and $\eta$.
Equation~\eqref{eq:pfail_ansatz} can be equivalently written as
\begin{align}
    \log \qty(\frac{p_\mr{fail}}{T}) = G(d) \log p + C(d), \label{eq:log_pfail_ansatz}
\end{align}
where
\begin{align*}
    G(d) &= \beta (d - d_0) + \eta, \\ 
    C(d) &= - (\beta \log \ScalingThrs{})(d - d_0) + \qty(\log \alpha - \eta \log \ScalingThrs{}).
\end{align*}
Hence, we can determine these parameters by two steps of linear regressions: first computing $G(d)$ and $C(d)$ by fitting $\log (p_\mr{fail}/T)$ against $\log p$ for each $d$ and then fitting $G(d)$ and $C(d)$ against $d$, respectively.
We select $d_0$ such that the uncertainties of the constant terms (i.e., $\eta$ and $\log\alpha - \eta\log\ScalingThrs{}$) of $G(d)$ and $C(d)$ are minimized.
(If such values of $d_0$ differ for $G(d)$ and $C(d)$, select their average.)

Since $p_\mr{fail}/T$ is expected to be constant when $p = \ScalingThrs{}$, we regard $\ScalingThrs{}$ as a `noise threshold' as well.
To distinguish the two types of thresholds $\CrossThrs{}$ and $\ScalingThrs{}$, we refer to them as the \emph{cross threshold} and \emph{scaling threshold}, respectively.
Note that $\CrossThrs{}$ and $\ScalingThrs{}$ are generally not equal, as the ansatz in Eq.~\eqref{eq:pfail_ansatz} is an asymptotical expression valid for sufficiently small $p$, where the sub-leading order terms with respect to $p$ are negligible.
Additionally, we specify the noise model (bit-flip or circuit-level) on the subscripts of the thresholds such as $\CrossThrs{bitflip}$, $\CrossThrs{circuit}$, $\ScalingThrs{bitflip}$, and $\ScalingThrs{circuit}$.

\section{Concatenated MWPM decoder without faulty measurements \label{sec:decoder_bit_flip}}

In this section, we describe the 2D variant of the concatenated MWPM decoder that is applicable only when every syndrome measurement is perfect.
Let us consider the 2D color code on a lattice $\LatticeTwoD$, which may have boundaries.
The decoder to predict $X$ errors in a single round can be briefly depicted as follows (see Fig.~\ref{fig:decoder_2D} for an example):
\begin{enumerate}
    \item \textbf{(First-round MWPM)} Input violated blue and green $Z$-type checks to the MWPM algorithm on the red-restricted lattice $\LatticeRest{\rbs}$, which returns a set of edges of $\LatticeRest{\rbs}$. 
    This set corresponds to a set $E_\mr{pred}^{(\rbs)}$ of red edges of $\LatticeTwoD$, each of which is predicted to contain one $X$ error. 
    See Fig.~\ref{fig:decoder_2D}(a).
    \item \textbf{(Second-round MWPM)} Input $E_\mr{pred}^{(\rbs)}$ and violated red $Z$-type checks to the MWPM algorithm on the `red-only lattice' $\LatticeMono{\rbs}$, which is constructed according to the connection structure of red edges and faces in $\LatticeTwoD$.
    The algorithm returns a set of edges of $\LatticeMono{\rbs}$, which correspond to a set of vertices $V_\mr{pred}^{(\rbs)}$ of $\LatticeTwoD$ that are predicted to have errors.
    See Fig.~\ref{fig:decoder_2D}(b).
    \item Repeat the above two steps (together referred to as the \textit{red sub-decoding procedure}) while varying the color to green and blue, obtaining $V_\mr{pred}^{(\gbs)}$ and $V_\mr{pred}^{(\bbs)}$. Select the smallest one $V_\mr{pred}$ among $V_\mr{pred}^{(\rbs)}$, $V_\mr{pred}^{(\gbs)}$, and $V_\mr{pred}^{(\bbs)}$ as the final outcome.
\end{enumerate}
Figure~\ref{fig:decoder_2D}(c) presents a set of residual errors after correction, which is equivalent to a stabilizer thus does not cause a logical failure.
$Z$ errors can be predicted by decoding $X$-type check outcomes in an analogous way.

To formally describe the above procedure, let us first define some notations.
For a lattice $\mathcal{L}$, the sets of its vertices, edges, and faces are respectively denoted as $\latelm{\mathcal{L}}{0}$, $\latelm{\mathcal{L}}{1}$, and $\latelm{\mathcal{L}}{2}$.
For each color $\cbs \in \qty{\rbs, \gbs, \bbs}$, we denote the sets of \cbs-colored edges and faces in $\LatticeTwoD$ as $\latelmcolor{\LatticeTwoD}{1}{\cbs}$ and $\latelmcolor{\LatticeTwoD}{2}{\cbs}$.
The set of faces with violated $Z$-type checks is denoted as $\sigma_Z \in \latelm{\LatticeTwoD}{2}$.
We partition $\sigma_Z$ into $\sigma_Z = \sigma_Z^{(\rbs)} \cup \sigma_Z^{(\gbs)} \cup \sigma_Z^{(\bbs)}$ such that $\sigma_Z^{(\cbs)} \subseteq \latelmcolor{\LatticeTwoD}{2}{\cbs}$ for each $\cbs \in \qty{\rbs, \gbs, \bbs}$.
The restricted and monochromatic lattices are then formally defined as follows:
\begin{definition}[\textbf{Restricted lattices}]
    The \emph{red-restricted lattice} $\LatticeRest{\rbs}$ is defined to have vertices of $\latelm{\LatticeRest{\rbs}}{0} = \latelmcolor{\LatticeTwoD}{2}{\gbs} \cup \latelmcolor{\LatticeTwoD}{2}{\bbs} \cup \qty{v_\mr{bdry}}$, where $v_\mr{bdry}$ is an additional boundary vertex. 
    For each red edge $e \in \latelmcolor{\LatticeTwoD}{1}{\rbs}$, we create an edge denoted as $\EdgeCorrRest{\rbs}(e)$ within $\LatticeRest{\rbs}$ as follows:
    If $e$ belongs to two faces $f_\gbs \in \latelmcolor{\LatticeTwoD}{2}{\gbs}$ and $f_\bbs \in \latelmcolor{\LatticeTwoD}{2}{\bbs}$, $\EdgeCorrRest{\rbs}(e)$ connects $f_\gbs$ and $f_\bbs$.
    If $e$ belongs to only one face $f \in \latelmcolor{\LatticeTwoD}{2}{\gbs} \cup \latelmcolor{\LatticeTwoD}{2}{\bbs}$, $\EdgeCorrRest{\rbs}(e)$ connects $f$ and $v_\mr{bdry}$.
    Note that $\EdgeCorrRest{\rbs}$ is a bijection between $\latelmcolor{\LatticeTwoD}{1}{\rbs}$ and $\latelm{\LatticeRest{\rbs}}{1}$.
    For $E \subseteq \latelmcolor{\LatticeTwoD}{1}{\rbs}$, we denote $\EdgeCorrRest{\rbs}(E) \coloneqq \qty{ \EdgeCorrRest{\rbs}(e) \mid e \in E}$.
    The green- and blue-restricted lattices $\LatticeRest{\gbs}$, $\LatticeRest{\bbs}$ (with bijections $\EdgeCorrRest{\gbs}$ and $\EdgeCorrRest{\bbs}$) are defined similarly.
    \label{def:restricted_lattices}
\end{definition}
\begin{definition}[\textbf{Monochromatic lattices}]
    The \emph{red-only lattice} $\LatticeMono{\rbs}$ is defined to have vertices of $\latelm{\LatticeMono{\rbs}}{0} = \latelmcolor{\LatticeTwoD}{1}{\rbs} \cup \latelmcolor{\LatticeTwoD}{2}{\rbs} \cup \qty{v_\mr{bdry}}$, where $v_\mr{bdry}$ is an additional boundary vertex. 
    For each vertex $v \in \latelm{\LatticeTwoD}{0}$, we connect an edge denoted as $\EdgeCorrMono{\rbs}(v)$ within $\LatticeMono{\rbs}$ as follows:
    If $v$ is commonly included in a red edge $e \in \latelmcolor{\LatticeTwoD}{1}{\rbs}$ and a red face $f \in \latelmcolor{\LatticeTwoD}{2}{\rbs}$, $\EdgeCorrMono{\rbs}(v)$ connects $e$ and $f$.
    If $v$ is only included either in a red face $f$ or in a red edge $e$, $\EdgeCorrMono{\rbs}(v)$ connects $f$ or $e$ with $v_\mr{bdry}$.
    Note that $\EdgeCorrMono{\rbs}$ is a bijection between $\latelm{\LatticeTwoD}{0}$ and $\latelm{\LatticeMono{\rbs}}{1}$.
    For $V \subseteq \latelm{\LatticeTwoD}{0}$, we denote $\EdgeCorrMono{\rbs}(V) \coloneqq \qty{ \EdgeCorrMono{\rbs}(v) \mid v \in V}$.
    The green- and blue-only lattices $\LatticeMono{\gbs}$, $\LatticeMono{\bbs}$ (with bijections $\EdgeCorrMono{\gbs}$ and $\EdgeCorrMono{\bbs}$) are defined similarly.
    \label{def:monochromatic_lattices}
\end{definition}

Using the above definitions, we formally describe the decoding process as follows:
\begin{enumerate}
    \item For each $\cbs \in \qty{\rbs, \gbs, \bbs}$, execute the \cbs-colored sub-decoding procedure as follows:
    \begin{enumerate}
        \item Perform the MWPM algorithm to identify a set $\wtilde{E}_\mr{pred}^{(\cbs)} \coloneqq \MWPM{\LatticeRest{\cbs}}{v_\mr{bdry}}{\sigma_Z^{(\cbs_1)} \cup \sigma_Z^{(\cbs_2)}} \subseteq \latelm{\LatticeRest{\cbs}}{1}$, where $\cbs_1$ and $\cbs_2$ are the other two colors besides $\cbs$.
        Define $E_\mr{pred}^{(\cbs)} \coloneqq \EdgeCorrRest{\cbs}^{-1} \qty(\wtilde{E}_\mr{pred}^{(\cbs)}) \in \latelmcolor{\LatticeTwoD}{1}{\cbs}$.
        \item Perform the MWPM algorithm to identify a set $\wtilde{V}_\mr{pred}^{(\cbs)} \coloneqq \MWPM{\LatticeMono{\cbs}}{v_\mr{bdry}}{\sigma_Z^{(\cbs)} \cup E_\mr{pred}^{(\cbs)}} \subseteq \latelm{\LatticeMono{\cbs}}{1}$ and define $V_\mr{pred}^{(\cbs)} \coloneqq \EdgeCorrMono{\cbs}^{-1} \qty(\wtilde{V}_\mr{pred}^{(\cbs)}) \in \latelm{\LatticeTwoD}{0}$
    \end{enumerate}
     \item Return the smallest one $V_\mr{pred}$ among $V_\mr{pred}^{(\rbs)}$, $V_\mr{pred}^{(\gbs)}$, and $V_\mr{pred}^{(\bbs)}$.
\end{enumerate}

In Appendix~\ref{app:validity_proof}, we formally prove that the above process always returns a valid error prediction consistent with the syndrome.
Additionally, we confirm that any outcome obtained from the projection decoder is a valid matching for the concatenated MWPM decoder, which implies that the former cannot outperform the latter.

In Fig.~\ref{fig:color_strategy_comparison} of Appendix~\ref{app:more_numerical_analyses}, we numerically verify that our color-selecting strategy of executing the sub-decoding procedures for all the three colors and selecting the best one is indeed necessary to maximize the performance of the decoder.
If only one or two colors are considered in this process, the failure rate of the decoder significantly increases.

\subsection{Performance analysis \label{subsec:bitflip_performance_analysis}}

To assess the performance of the decoder, we consider the setting in Sec.~\ref{subsec:settings} under the bit-flip noise model of strength $p$.
Since the bit-flip noise is independently applied to each round, the cross threshold is invariant under $T$, thus we simply denote $\CrossThrs{bitflip} \coloneqq \CrossThrs{bitflip}(T) = \CrossThrsLT{bitflip}$ and consider only the case of $T=1$.
We employ the \pymatching library \cite{higgott2022pymatching} to run the MWPM algorithm.

\begin{figure}[!t]
    \centering
    \includegraphics[width=\linewidth]{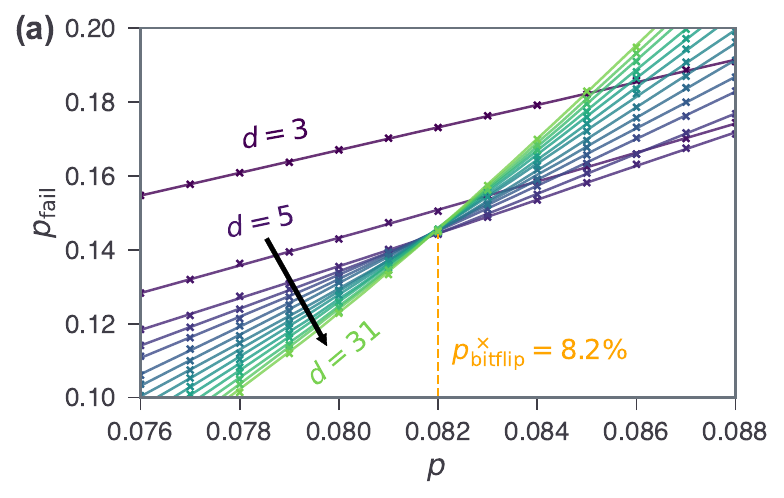}
    \includegraphics[width=\linewidth]{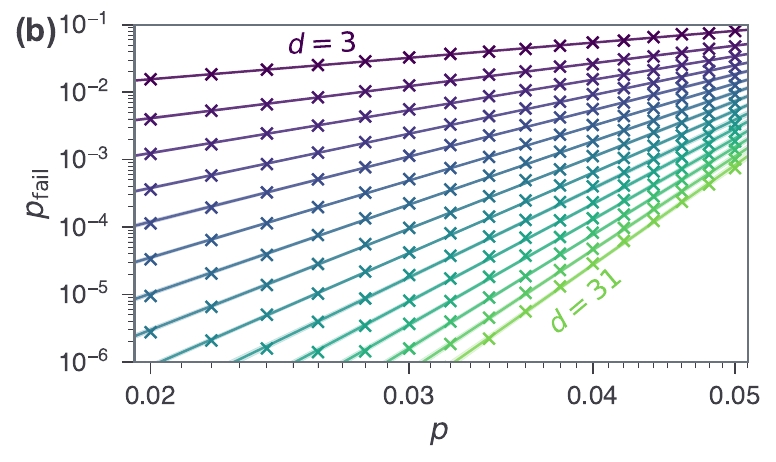}
    \includegraphics[width=\linewidth]{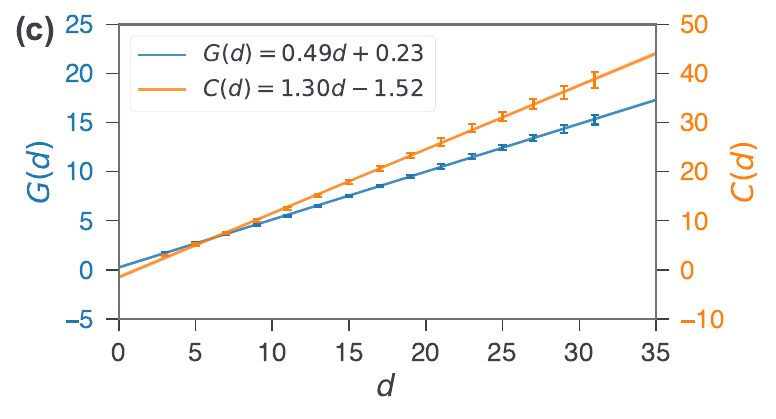}
    \caption{
        \textbf{Numerical analysis of the concatenated MWPM decoder under bit-flip noise.}
        We consider a single round of QEC on a triangular patch with code distance $d$ under the bit-flip noise model of strength $p$.
        The logical failure rates~$p_\mr{fail}$ are plotted over $p$ \subfig{a} near the threshold and \subfig{b} in a sub-threshold region for various code distances ($d=3, 5, 7, \cdots, 31$).
        In \subfig{a}, the solid lines are the LOWESS regressions with fraction $2/3$.
        The obtained cross threshold is $p_\mr{bitflip}^\times = 8.2\%$.
        In \subfig{b}, the solid lines are the linear regressions in the logarithmic scale, each of which is in the form of Eq.~\eqref{eq:log_pfail_ansatz}.
        The 99\% confidence intervals (CIs) of the regression estimates are depicted as shaded regions around the solid lines.
        The number of samples for obtaining each data point is selected so that the 99\% CI of $p_\mr{fail}$ is $\pm 10^{-3}$ for \subfig{a} and $\pm 0.05p_\mr{fail}$ for \subfig{b}.
        In \subfig{c}, the 99\% CIs of $G(d)$ and $C(d)$ in Eq.~\eqref{eq:log_pfail_ansatz} are plotted over $d$ with their linear regressions.
        From the regression parameters, we obtain the parameters of the ansatz of Eq.~\eqref{eq:pfail_ansatz} as presented in Eq.~\eqref{eq:ansatz_parameters_bitflip}.
    }
    \label{fig:bitflip_numerical_analysis}
\end{figure}

In Fig.~\ref{fig:bitflip_numerical_analysis}(a) and~(b), we present the logical failure probabilities $p_\mr{fail}$ computed for various code distances $d$ when $p$ is near the threshold and when $p$ is sufficiently lower than the threshold, respectively.
From Fig.~\ref{fig:bitflip_numerical_analysis}(a), we can clearly observe that the cross threshold is $\CrossThrs{bitflip} \approx 8.2\%$.
In Fig.~\ref{fig:bitflip_numerical_analysis}(b), the data points are linearly fitted for each $d$ in the logarithmic scale, where the slopes $G(d)$ and constant terms $C(d)$ are plotted over $d$ in Fig.~\ref{fig:bitflip_numerical_analysis}(c).
The regression parameters of $G(d)$ and $C(d)$ are estimated as
\begin{align*}
    G(d) = (0.488 \pm 0.006) (d - 17) + (8.53 \pm 0.05), \\
    C(d) = (1.30 \pm 0.02) (d - 17) + (20.6 \pm 0.2),
\end{align*}
by selecting $d_0 = 17$, where the error terms are the 99\% confidence intervals (CIs) estimated based on Students' $t$-distribution by using the python module \texttt{statsmodels} \cite{seabold2010statsmodels}.
The corresponding ansatz parameters in Eq.~\eqref{eq:pfail_ansatz} are estimated as
\begin{align}
    \begin{split}
        &\ScalingThrs{bitflip} \approx 0.069, \quad \alpha \approx 0.12, \quad \beta \approx 0.49, \\ 
        &\eta \approx 8.5, \quad d_0 = 17.
    \end{split}
    \label{eq:ansatz_parameters_bitflip}
\end{align}


\begin{figure}[!t]
    \centering
    \includegraphics[width=\linewidth]{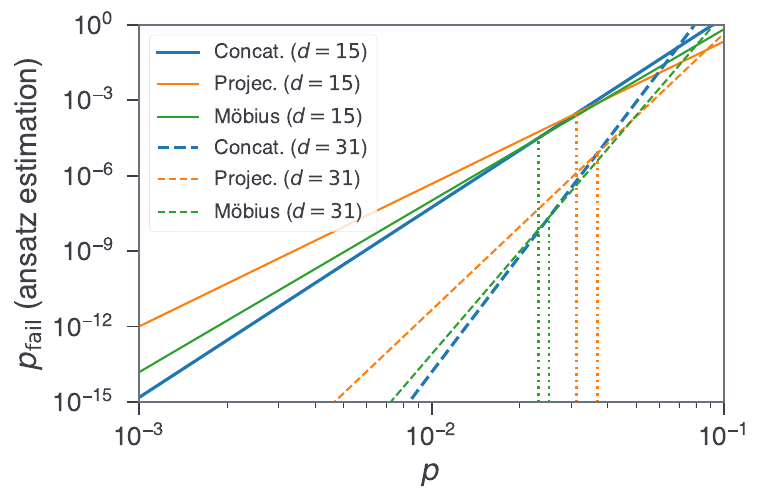}
    \caption{
        \textbf{Comparison of three matching-based decoders under bit-flip noise.}
        Logical failure rates $p_\mr{fail}$ estimated from the ansatz of Eq.~\eqref{eq:pfail_ansatz} are plotted over $p$ for two code distances $d \in \qty{15, 31}$ and three decoders: the concatenated (Concat.), projection (Proj.) \cite{delfosse2014decoding,beverland2021cost}, and Möbius MWPM decoders \cite{sahay2022decoder}.
        The parameter values used for the estimation are those in Eq.~\eqref{eq:ansatz_parameters_bitflip} for the concatenated MWPM decoder, $(\ScalingThrs{bitflip}, \alpha, \beta, \eta, d_0) = (0.087, 0.1, 1/3, 2/3, 0)$ for the projection decoder (estimated optimistically by assuming $\ScalingThrs{bitflip} = \CrossThrs{bitflip}$), and $(\ScalingThrs{bitflip}, \alpha, \beta, \eta, d_0) = (0.0801, 0.148, 0.422, 0.488, 0)$ for the Möbius decoder (reported in Ref.~\cite{sahay2022decoder}).
        The intersections of the curves of the concatenated MWPM decoder and the other two decoders are $3.12\%$ ($d=15$) and $3.69\%$ ($d=31$) for the projection decoder and $2.31\%$ ($d=15$) and $2.52\%$ ($d=31$) for the Möbius decoder.
    }
    \label{fig:decoder_comparison_bitflip}
\end{figure}

Although the obtained cross threshold $\CrossThrs{bitflip} = 8.2\%$ is lower than 8.7\% of the projection decoder \cite{delfosse2014decoding} and 9.0\% of Möbius decoder \cite{sahay2022decoder}, the concatenated MWPM decoder has a significant advantage in terms of the scaling of the failure rate over $d$ and $p$.
Namely, $\beta \approx 1/2$ for our decoder within our simulation range of $d \leq 31$, while $\beta \approx 1/3$ for the projection decoder \cite{beverland2021cost} and $\beta \approx 3/7$ for the Möbius decoder \cite{sahay2022decoder}.
The impact of this improvement is numerically presented in Fig.~\ref{fig:decoder_comparison_bitflip}, where logical failure rates $p_\mr{fail}$ are estimated using the ansatz of Eq.~\eqref{eq:pfail_ansatz} for these three decoders and two code distances $d \in \qty{15, 31}$.
The values of the parameters $(\ScalingThrs{bitflip}, \alpha, \beta, \eta, d_0)$ that we use for the estimation are $(0.087, 0.1, 1/3, 2/3, 0)$ for the projection decoder (which are optimistically guessed by assuming $\ScalingThrs{bitflip} = \CrossThrs{bitflip}$) and $(0.0801, 0.148, 0.422, 0.488, 0)$ for the Möbius decoder (which are reported in Ref.~\cite{sahay2022decoder}).
We observe that the concatenated MWPM decoder outperforms the other two decoders when $p \lessapprox 2\% \approx \CrossThrs{bitflip}/4$.

We note that the value of $\beta$ is closely related to the least-weight uncorrectable errors; namely, a decoder may be able to correct any error of weight smaller than $O(\beta d)$ for a color code with code distance $d$.
The projection decoder cannot correct specific types of errors with weights $O(d/3)$ \cite{beverland2021cost,sahay2022decoder}, which are correctable via the concatenated MWPM decoder as shown in Appendix~\ref{subsec:uncorrectable_errors_projection}.
Likewise, we can find weight-$O(3d/7)$ uncorrectable errors for the Möbius decoder \cite{sahay2022decoder}.

One may be led to guess that the concatenated MWPM decoder can correct any error of weight smaller than $O(d/2)$. 
Surprisingly, however, there exist uncorrectable errors with weights $O(3d/7)$ just as with the Möbius decoder, as illustrated in Appendix~\ref{subsec:uncorrectable_errors_concat}.
This raises the question of how the numerical analysis appears to scale as $\beta \approx 1/2$.
We speculate that small-weight uncorrectable errors with weights $< O(d/2)$ may exist only when $d$ is sufficiently large.
As evidence, weight-$O(3d/7)$ uncorrectable errors of the type described in Appendix~\ref{subsec:uncorrectable_errors_concat} exist only when $d \geq 25$.
Hence, we conjecture that $\beta$ may approach a limit of $3/7$ for sufficiently large values of $d$, although its confirmation would demand extensive computational resources.
Nonetheless, the concatenated MWPM decoder is still beneficial compared to the Möbius decoder, where $\beta$ is computed to be about $3/7$ even for small values of $d < 25$.
Note that Ref.~\cite{sahay2022decoder} also suggests a modification of the Möbius decoder that manually compares inequivalent recovery operators, which is proved to correct every error with weight less than $d/2$ when $d \leq 13$.
This improvement has not been considered in Fig.~\ref{fig:decoder_comparison_bitflip}.

An additional numerical analysis for bit-flip noise is presented in Fig.~\ref{fig:color_strategy_comparison} of Appendix~\ref{app:more_numerical_analyses}, showing that our strategy of comparing the outcomes from the sub-decoding procedures of all the three colors is indeed effective compared to performing only one or two sub-decoding procedures.

\section{Generalized concatenated MWPM decoder for circuit-level noise \label{sec:decoder_circuit_level}}

The bit-flip noise model considered in the previous section is adequate for assessing basic performance features of decoders, but it does not represent realistic noise that is relevant to fault-tolerant quantum computing.  In practice, syndrome measurements are not perfect.
We thus need to modify our decoder appropriately to accommodate circuit-level noise.
We adapt our concatenated MWPM decoder to a circuit-level noise model defined in Sec.~\ref{subsec:settings} by using the concept of \emph{detector error model} (DEM) \cite{gidney2021stim}, which is a list of independent error mechanisms.
We generalize the three steps of the concatenated MWPM decoder in Sec.~\ref{sec:decoder_bit_flip} using DEMs and employ the \emph{\stim} library \cite{gidney2021stim} to implement and analyze the decoder.

\subsection{Circuit and detectors \label{subsec:circuit}}

\begin{figure}[!t]
    \centering
    \includegraphics[width=\linewidth]{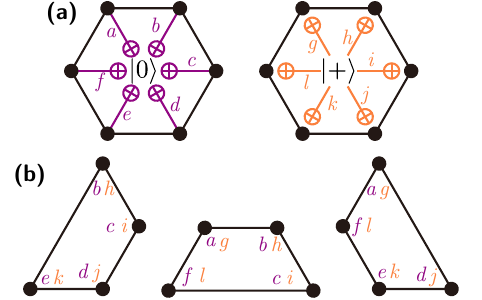}
    \caption{
        \textbf{Syndrome extraction circuit for color codes.} 
        \subfig{a} Circuits for measuring checks on hexagonal faces, which are composed of multiple \cnot gates. 
        The left (right) circuit followed by a $Z$ ($X$) measurement on the center ancillary qubit measures the $Z$-type ($X$-type) checks.
        Twelve variables $a, b, \cdots, l$ are positive integers specifying the time slices that the corresponding \cnot gates are applied.
        \subfig{b} Faces located along the boundaries and corresponding time-slice variables of the \cnot gates, colored in purple (orange) for the measurements of $Z$-type ($X$-type) checks.
        The syndrome extraction circuits are in the same form as \subfig{a}, and are therefore omitted from the figure for simplicity.
        }
    \label{fig:syndrome_extraction_circuit}
\end{figure}

We first need to clarify the circuit implementing a color code, especially its syndrome extraction.
We suppose that each face of the lattice contains two ancillary qubits (referred to as $Z$-type and $X$-type ancillary qubits), which are respectively used to extract the measurement outcomes of the $Z$-type and $X$-type checks on the face.
We consider the scenario of $T$ rounds of the logical idling gate with logical initialization and measurement, as described in Sec.~\ref{subsec:settings}.

In each round of syndrome extraction, $Z$-type ($X$-type) ancillary qubits are first initialized to $\ket{0}$ ($\ket{+}$), followed by \cnot gates between the ancillary and data qubits, concluding with the $Z$-basis ($X$-basis) measurement of the ancilla.
The arrangement of these \cnot gates are presented in Fig.~\ref{fig:syndrome_extraction_circuit}(a), where the left (right) circuit followed by a $Z$ ($X$) measurement on the center $Z$-type ($X$-type) ancillary qubit measures the $Z$-type ($X$-type) check on the face.
We have a degree of freedom on choosing the \cnot schedule (i.e., the order of applying these \cnot gates), which will be discussed later in Sec.~\ref{subsec:optimizing_cnot_schedule}.

We group the operations in the circuit in a way that each group (called a \emph{time slice}) is composed of consecutive operations applied on distinct qubits that can be performed simultaneously.
As stated in Sec.~\ref{subsec:settings}, we regard that, in each time slice, idle gates $I$ are applied on all the qubits that are not involved in any non-trivial operations.

Let $\AncMeasOutcome{f}{Z}{t}, \AncMeasOutcome{f}{X}{t} \in \qty{\pm 1}$ denote the measurement outcomes of the $Z$-type and $X$-type ancillary qubits, respectively, of a face $f$ in the $t$-th round, where $1 \leq t \leq T$.
We also define $\DataMeasOutcome{v} \in \qty{\pm 1}$ as the final measurement outcome of the data qubit located at a vertex $v$.
For every face $f$ and every pair of integers $t \in \qty{1, 2, \cdots, T+1}$ and $s \in \qty{2, 3, \cdots, T}$, the values
\begin{subequations}
\begin{align}
    \Detector{f}{Z}{t} &\coloneqq \begin{cases}
        \AncMeasOutcome{f}{Z}{1} & \text{if } t = 1, \\[0.5em]
        \AncMeasOutcome{f}{Z}{t-1} \AncMeasOutcome{f}{Z}{t} & \text{if } 2 \leq t \leq T, \\[0.5em]
        \AncMeasOutcome{f}{Z}{T} \prod_{v \in f} \DataMeasOutcome{v} & \text{if } t = T + 1,
    \end{cases} \label{eq:Z_type_detector}\\
    \Detector{f}{X}{s} &\coloneqq \AncMeasOutcome{f}{X}{s-1} \AncMeasOutcome{f}{X}{s}, \label{eq:X_type_detector}
\end{align}
\label{eq:detectors}
\end{subequations}
must be $+1$ when there are no errors.
We refer to these values as \emph{detectors} and classify them by their Pauli types and the colors of the faces; for example, $D_{f,Z}^{(t)}$ is a red $Z$-type detector if $f$ is a red face.
If a detector is $-1$, we say it is violated.

Lastly, the final measurement of the logical-$Z$ operator always has an outcome of $+1$ when there are no errors.
Thus, if $V_\mr{bdry}^{(\rbs)}$ is defined by the set of the vertices placed along the red boundary, the value
\begin{align}
    L_Z = \prod_{v \in V_\mr{bdry}^{(\rbs)}} \DataMeasOutcome{v} \label{eq:logical_observable}
\end{align}
must be $+1$ when there are no errors.
We refer to $L_Z$ as the \emph{logical observable} of our scenario.
After decoding, a correction $L_Z^{(\mr{corr})}$ is obtained and the decoding succeeds if and only if $L_Z = L_Z^{(\mr{corr})}$.
We describe $L_Z$ as `$Z$-type' because it involves only $Z$-basis measurements and shares common factors ($\qty{\DataMeasOutcome{v}}$) exclusively with $Z$-type detectors; see Eqs.~\eqref{eq:detectors} and~\eqref{eq:logical_observable}.
Note that, for an alternative scenario where the logical qubit is initialized and measured in the $\lgx$ basis, we have one $X$-type logical observable instead.

\subsection{Generalization of the concatenated MWPM decoder}

A \emph{detector error model} (DEM) is defined by a set of independent error mechanisms.
Each error mechanism, which is formally described as a 3-tuple $(q, \mathcal{D}, \mathcal{O})$, specifies the probability ($q$) that the error occurs and the set of detectors ($\mathcal{D}$) and logical observables ($\mathcal{O}$) flipped by the error.
The elements of $\mathcal{D} \cup \mathcal{O}$ are referred to as the \emph{targets} of the error mechanism.
We say that an error mechanism $(q, \mathcal{D}, \mathcal{O})$ is \emph{edge-like} if and only if $\abs{\mathcal{D}} \leq 2$.

The \stim library \cite{gidney2021stim} can be used to extract a DEM from a noisy Clifford circuit (which is composed of Clifford gates, Pauli initializations/measurements, and Pauli noise channels) provided that detectors and logical observables are annotated appropriately.
It works as follows:
If the circuit has a single-Pauli error channel $\paulichannel{q}{P}$, which applies a specific Pauli product operator $P$ with probability $q$, we can get its effect by commuting $P$ to the end of the circuit and checking the detectors and logical observables flipped by it.
If the circuit has a depolarizing channel $\depchannelone{p}$ or $\depchanneltwo{p}$, we can convert it into a sequence of single-Pauli error channels as 
\begin{align*}
    \depchannelone{p} &= \paulichannel{q_1}{X} \circ \paulichannel{q_1}{Y} \circ \paulichannel{q_1}{Z}, \\ 
    \depchanneltwo{p} &= \paulichannel{q_2}{X \otimes I} \circ \paulichannel{q_2}{X \otimes X} \circ \paulichannel{q_2}{X \otimes Y} \circ \cdots \circ \paulichannel{q_2}{Z \otimes Z},
\end{align*}
where $q_1 \coloneqq (1 - \sqrt{1 - 4p/3})/2$ and $q_2 \coloneqq [1 - (1 - 16p/15)^{1/8}]/2$.
We can then take account of these single-Pauli error channels individually.
Note that such exact decomposition might be not possible for general Pauli noise channels, but it is not important in our discussions.

In the 2D variant of the decoder presented in Sec.~\ref{sec:decoder_bit_flip}, we consider two sub-lattices (\cbs-restricted lattice $\LatticeRest{\cbs}$ and \cbs-only lattice $\LatticeMono{\cbs}$) for each sub-decoding procedure of color $\cbs$.
Analogously, for each color \cbs, we deform and decompose the DEM $\mathcal{M}$ obtained from a color-code circuit into the \emph{\cbs-restricted DEM} $\ResDEM{\cbs}$ and \emph{\cbs-only DEM} $\OnlyDEM{\cbs}$.

\begin{algorithm*}[!t]
\caption{Decomposition of a color-code detector error model (DEM)}
\label{alg:decomposing_DEM}
\KwIn{A DEM $\mathcal{M} = \qty{(q_i, \mathcal{D}_i, \mathcal{O}_i)}_i$ \newline 
A color $\cbs \in \qty{\rbs, \gbs, \bbs}$}
\KwOut{A \cbs-restricted DEM $\ResDEM{\cbs}$ \newline 
A \cbs-only DEM $\OnlyDEM{\cbs}$ \newline
A set of new `virtual' detectors $\qty{D_e}_{e \in \ResDEM{\cbs}}$
}
\tcc{Separating $Z$- and $X$-type detectors}
$\mathcal{M}_{ZX} \gets \emptyset$\;
\ForEach{$(q, \mathcal{D}, \mathcal{O}) \in \mathcal{M}$}{
    \ForEach{$P \in \qty{Z, X}$}{
        $\mathcal{D}_P \gets \qty{\text{$P$-type detectors in $\mathcal{D}$}}$\;
        $\mathcal{O}_P \gets \qty{\text{$P$-type logical observables in $\mathcal{D}$}}$\tcp*{Our scenario has only one $Z$-type logical observable, but this is to ensure generalizability.}
        \uIf{$\mathcal{D}_P \neq \emptyset$}{
            Add $(q, \mathcal{D}_P, \mathcal{O}_P)$ to $\mathcal{M}_{ZX}$\;
        }
    }
}
Compress $\mathcal{M}_{ZX}$ (i.e., successively merge a pair of error mechanisms of $\mathcal{M}_{ZX}$ that have the same targets while updating the probability as $q = q_1 + q_2 - 2q_1q_2$, where $q_1$ and $q_2$ are respectively the probabilities of the two error mechanisms)\;
\tcc{Constructing $\ResDEM{\cbs}$}
$\ResDEM{\cbs} \gets \emptyset$\;
\ForEach{$(q, \mathcal{D}, \mathcal{O}) \in \mathcal{M}_{ZX}$}{
    $\mathcal{D}_\mr{\neg\cbs} \gets \qty{\text{Detectors in $\mathcal{D}$ that are not \cbs-colored}}$\;
    \uIf{$0 < \abs{\mathcal{D}_\mr{\neg\cbs}} \leq 2$}{
        Add $(q, \mathcal{D}_\mr{\neg\cbs}, \emptyset)$ to $\ResDEM{\cbs}$\;
    }
}
Compress $\ResDEM{\cbs}$\;
$\qty{D_e}_{e \in \ResDEM{\cbs}} \gets \text{A set of new detectors}$\;
\tcc{Constructing $\OnlyDEM{\cbs}$}
$\OnlyDEM{\cbs} \gets \emptyset$\;
\ForEach{$(q, \mathcal{D}, \mathcal{O}) \in \mathcal{M}_{ZX}$}{
    $\mathcal{D}_\mr{\neg\cbs} \gets \qty{\text{Detectors in $\mathcal{D}$ that are not \cbs-colored}}$\;
    \uIf{$\mathcal{D}_\mr{\neg\cbs} = \emptyset$ and $\abs{\mathcal{D}} \leq 2$}{
        Add $(q, \mathcal{D}, \mathcal{O})$ to $\OnlyDEM{\cbs}$\;
    }\ElseIf{$0 < \abs{\mathcal{D}_\mr{\neg\cbs}} \leq 2$ and $\abs{\mathcal{D} \setminus \mathcal{D}_\mr{\neg\cbs}} \leq 1$}{
        $e \gets \text{An element of $\ResDEM{\cbs}$ with detectors $\mathcal{D}_\mr{\neg\cbs}$}$ (which uniquely exists)\;
        Add $(q, \mathcal{D} \setminus \mathcal{D}_\mr{\neg\cbs} \cup \qty{D_e}, \mathcal{O})$ to $\OnlyDEM{\cbs}$\;
    }
}
\end{algorithm*}

The step-by-step instructions to construct $\ResDEM{\cbs}$ and $\OnlyDEM{\cbs}$ from $\mathcal{M}$ are presented in Algorithm~\ref{alg:decomposing_DEM}.
These can be outlined for $\cbs=\rbs$ as follows:
First (in lines~1--10), we define $\mathcal{M}_{ZX}$ from $\mathcal{M}$ by separating each error mechanism that affects both $Z$- and $X$-type detectors into two independent mechanisms, ensuring every error mechanism affects only one detector type.
If the logical observable $L_Z$ is a target of an error mechanism to be separated, $L_Z$ is included only in the $Z$-type part after the separation.
Note that this process is equivalent to ignoring correlations between $X$ and $Z$ errors originated from $Y$ errors or two-qubit errors such as $X \otimes Z$.
We then compress $\mathcal{M}_{ZX}$; namely, we merge error mechanisms of $\mathcal{M}_{ZX}$ that have the same set of targets while updating the probabilities properly.
After that (in lines 11--18), $\ResDEM{\rbs}$ is constructed by removing red detectors and logical observables from each error mechanism of $\mathcal{M}_{ZX}$ and compressing it.
Importantly, we introduce a new \emph{virtual detector} $D_e$ for each error mechanism $e$ in $\ResDEM{\rbs}$.
Lastly (in lines 19--28), we build $\OnlyDEM{\rbs}$ by replacing the green and blue detectors of each error mechanism of $\mathcal{M}_{ZX}$ with the corresponding virtual detector.
As a consequence, only red detectors, virtual detectors, and logical observables are involved in $\OnlyDEM{\rbs}$.

In addition to the above description, Algorithm~\ref{alg:decomposing_DEM} also contains processes to leave only edge-like mechanisms in the decomposed DEMs (see the conditions in lines~14, 22, and 24).
By doing so, each of the DEMs can be expressed as a weighted graph $G$ whose vertices comprise detectors and an additional boundary vertex $v_\mr{bdry}$.
Namely, each error mechanism $(q, \mathcal{D}, \mathcal{O})$ corresponds to an edge of $G$ with weight $\log \qty[(1-q)/q]$, which connects the two detectors in $\mathcal{D}$ (if $\abs{\mathcal{D}} = 2$) or the only detector in $\mathcal{D}$ and $v_\mr{bdry}$ (if $\abs{\mathcal{D}} = 1$).
This graph can be used to perform the MWPM algorithm that predicts one of the most probable combinations of error mechanisms consistent with given violated detectors.
If a logical observable is flipped by an odd number of these error mechanisms, its correction is $-1$; otherwise, it is $+1$.

With the above ingredients, we can finally describe how the concatenated MWPM decoder works.
Denoting the set of violated \cbs-colored detectors as $\sigma^{(\cbs)}$ for each $\cbs \in \qty{\rbs, \gbs, \bbs}$, the decoder works as follows:
\begin{enumerate}
    \item Obtain the red-restricted DEM $\ResDEM{\rbs}$, red-only DEM $\OnlyDEM{\rbs}$, and set of virtual detectors $\qty{D_e}_{e \in \ResDEM{\rbs}}$ from the DEM of the original circuit through Algorithm~\ref{alg:decomposing_DEM}.
    \item Perform the MWPM algorithm on $\ResDEM{\rbs}$ with the input $\sigma^{(\gbs)} \cup \sigma^{(\bbs)}$ and obtain a least-weight error set $E_\rbs \subseteq \ResDEM{\rbs}$.
    \item Perform the MWPM algorithm on $\OnlyDEM{\rbs}$ with the input $\sigma^{(\rbs)} \cup \qty{D_e \mid e \in E_\rbs}$ and obtain a correction $L_Z^{(\mr{corr},\rbs)} \in \qty{\pm 1}$ and the corresponding total weight $w_\rbs$ of predicted errors.
    \item Repeat the above steps for the other two colors and obtain $L_Z^{(\mr{corr},\gbs)}$ and $L_Z^{(\mr{corr},\bbs)}$ with the corresponding total weights $w_\gbs$ and $w_\bbs$. 
    \item Return $L_Z^{(\mr{corr},\cbs)}$ where $\cbs \coloneqq \argmin_{\cbs' \in \qty{\rbs, \gbs, \bbs}} w_{\cbs'}$.
\end{enumerate}

\subsection{Optimization of the CNOT schedule \label{subsec:optimizing_cnot_schedule}}

\begin{figure}[!t]
    \centering
    \includegraphics[width=\linewidth]{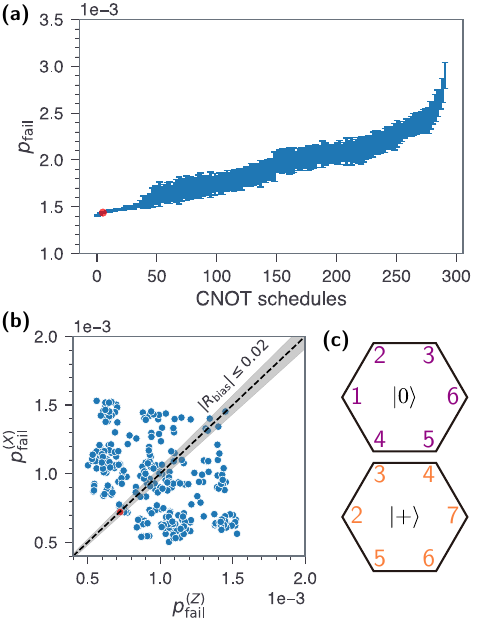}
    \caption{
        \textbf{Performance comparison of 292 length-7 CNOT schedules.}
        \subfig{a} The failure rate $p_\mr{fail}$ of $T=7$ rounds of the logical idling gate when using the concatenated MWPM decoder is evaluated for code distance $d=7$ under the circuit-level noise model of strength $p = 10^{-3}$ for each of these \cnot schedules (sorted by the values of $p_\mr{fail}$ in ascending order).
        The error bars correspond to the 99\% CIs.
        The selected optimal schedule is $[2, 3, 6, 5, 4, 1; 3, 4, 7, 6, 5, 2]$, which is marked as a red dot.
        \subfig{b} The $Z$- and $X$-failure rates $p_\mr{fail}^{(Z)}$, $p_\mr{fail}^{(X)}$ for these \cnot schedules are visualized as a scatter plot.
        The dashed line plots $p_\mr{fail}^{(X)} = p_\mr{fail}^{(Z)}$ and the gray area indicates the region of $\abs{R_\mathrm{bias}} \leq 0.02$.
        The selected optimal schedule is marked as a red dot.
        \subfig{c} Selected optimal schedule following the same graphical representation as in Fig.~\ref{fig:syndrome_extraction_circuit}(a).
    }
    \label{fig:cnot_schedules_comparison}
\end{figure}

The time order of \cnot gates for syndrome extraction, called the \cnot schedule, needs to be optimized carefully before analyzing the performance of the decoder.
We use a similar method as Ref.~\cite{beverland2021cost} to determine it.
A brief review of this is as follows:
The \cnot schedule can be specified by a tuple of twelve positive integers $\mathcal{A} = \qty[a, b, c, d, e, f; g, h, i, j, k, l]$, which contains all the integers from 1 to $\max\mathcal{A}$ (called the length of the schedule).
Each integer indicates the time slice at which the corresponding \cnot gate presented in Fig.~\ref{fig:syndrome_extraction_circuit} is applied; that is, we first apply the \cnot gates with integer 1, then apply those with integer 2, and so on.
Several conditions need to be imposed on $\mathcal{A}$ since each qubit can be involved in at most one \cnot gate per time slice and $Z$-type and $X$-type syndrome measurements should not interfere each other; see Sec.~II~C of Ref.~\cite{beverland2021cost} for their explicit descriptions.
No \cnot schedules with length less than 7 satisfy these conditions.
However, there are 876 valid length-7 schedules and we can leave only 292 among them by removing every schedule equivalent to another schedule up to a symmetry.\footnote{This number (292) of valid length-7 schedules is inconsistent with the number (234) reported in Ref.~\cite{beverland2021cost}. 
We discussed it with the first author of Ref.~\cite{beverland2021cost}, but could not identify the precise reason of this discrepancy.
We have verified using \stim that all of the 292 schedules that we identify give the same detectors as intended.  We also note that, should we have mistakenly included additional redundant schedules, this would not affect finding an optimal schedule.}

We note that our simulating scenario described in Sec.~\ref{subsec:settings} is only for computing the $\ov{Z}$-failure rate $p_\mr{fail}^{(Z)}$ (i.e., failure rate of the $Z$ observable).
Nonetheless, we can utilize the fact that the $\ov{X}$-failure rate $p_\mr{fail}^{(X)}$ is equal to the $\ov{Z}$-failure rate when using the \cnot schedule with the ``$Z$-part'' (i.e., first six integers) and ``$X$-part'' (i.e., last six integers) reversed from the original one.
For example, the $\ov{X}$-failure rate for the schedule $[2, 3, 6, 5, 4, 1; 3, 4, 7, 6, 5, 2]$ can be computed from our scenario by instead using the schedule $[3, 4, 7, 6, 5, 2; 2, 3, 6, 5, 4, 1]$.
For a given \cnot schedule, we compute the logical failure rate as $p_\mr{fail} = p_\mr{fail}^{(X)} + p_\mr{fail}^{(Z)}$ and quantify the bias between them as
\begin{align}
    R_\mr{bias} \coloneqq \log_{10} \frac{p_\mathrm{fail}^{(Z)}}{p_\mathrm{fail}^{(X)}},
    \label{eq:bias_coeff}
\end{align}
which is zero when there is no bias.

\begin{figure*}[!t]
    \centering
    \includegraphics[width=\textwidth]{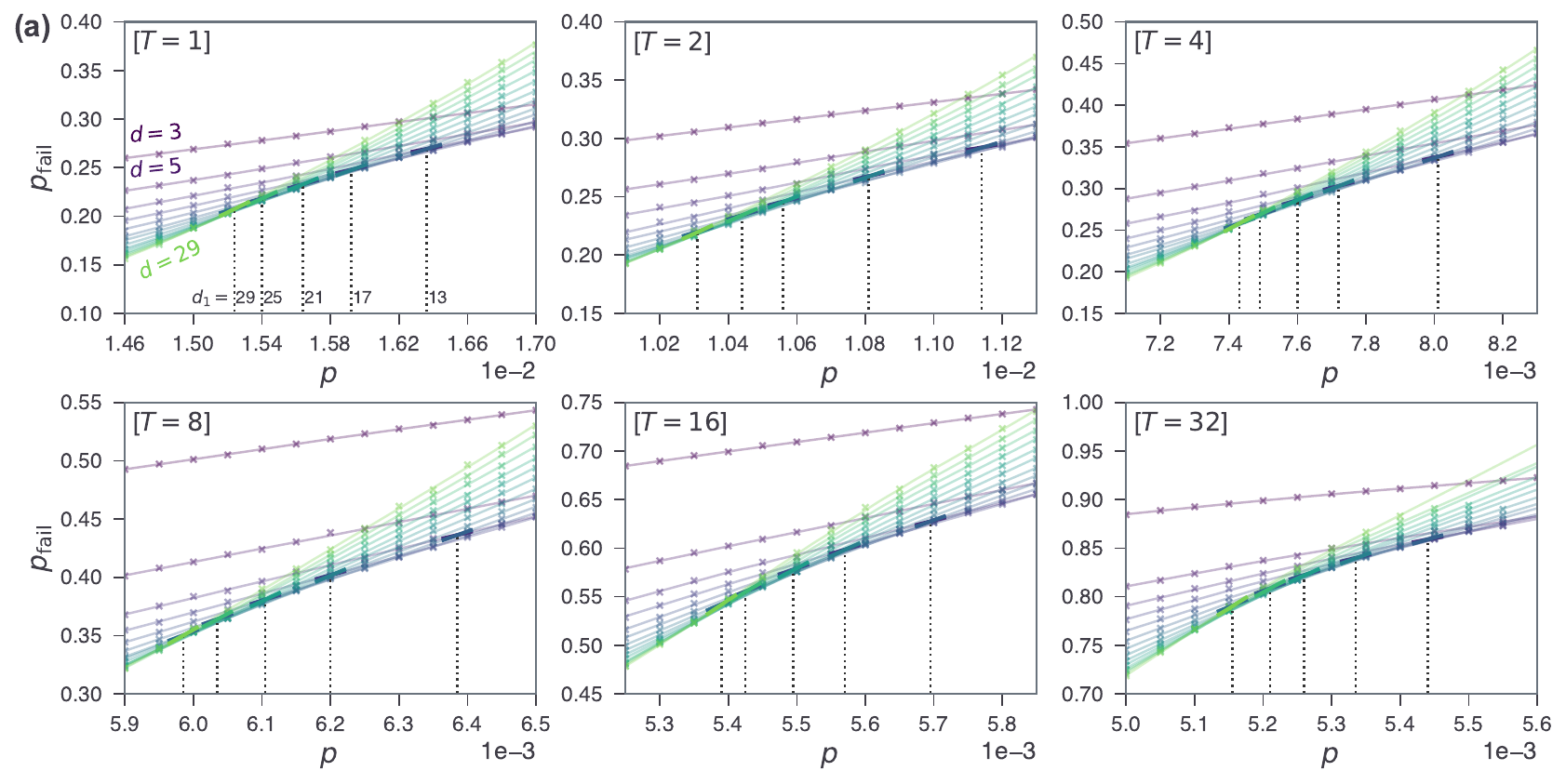}
    \includegraphics[width=0.86666667\textwidth]{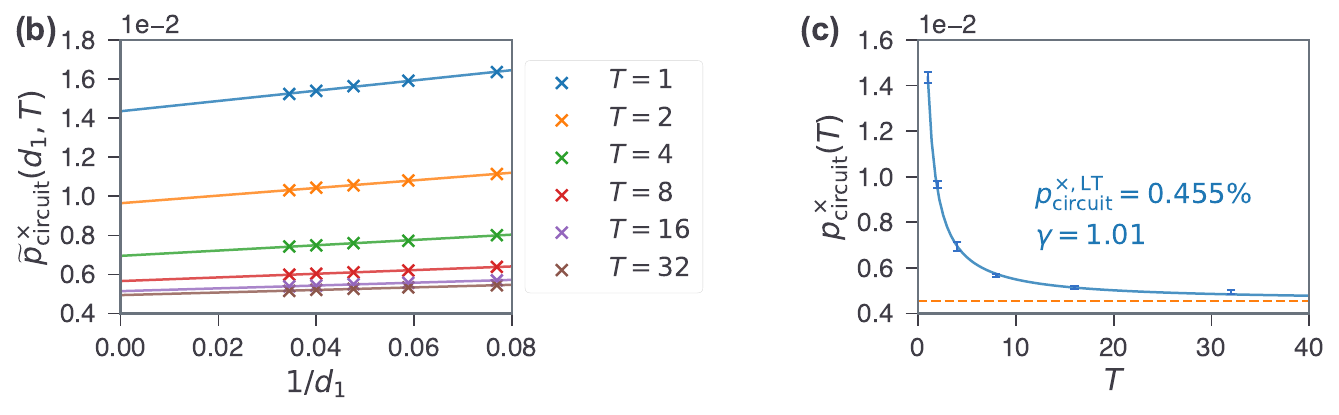}
    \caption{
        \textbf{Numerical analysis of the decoder under near-threshold circuit-level noise.}
        We consider $T \in \qty{1, 2, 4, 8, 16, 32}$ rounds of the logical idling gate of the triangular color code with code distance $d \in \qty{3, 5, 7, \cdots, 29}$ under the circuit-level noise model of strength $p$.
        \subfig{a} Logical failure rates $p_\mr{fail} = p_\mr{fail}^{(X)} + p_\mr{fail}^{(Z)}$ are plotted over $p$ for each $T$ and $d$.
        Each data point (marked as `X') has the 99\% CI of $\pm 10^{-3}$.
        For each $d$, the LOWESS regression of $p_\mr{fail}$ with fraction $2/3$ is drawn as a solid line.
        For each $d_1 \in \qty{13, 17, 21, 25, 29}$, the crossing of the two regression lines of $d=d_1$ and $d=\qty(d_1 + 1)/2$ is highlighted in dark color.
        \subfig{b} The values of $p$ at these crossings $\CrossThrsVar{circuit} (d_1, T)$ are plotted over $1/d_1$ for each $T$ with their linear regressions.
        The cross thresholds $\CrossThrs{circuit}(T)$ are estimated by the intersections of these regressions with the vertical axis.
        \subfig{c} The 99\% CIs of the estimations of $\CrossThrs{circuit}(T)$ are presented with a fitted curve in the form of Eq.~\eqref{eq:long_term_thrs_ansatz}, where $\CrossThrsLT{circuit} \approx 0.455\%$ and $\gamma \approx 1.01$.
        The orange dashed line plots $\CrossThrs{circuit}(T) = \CrossThrsLT{circuit}$.
    }
    \label{fig:circuit_numerical_analysis_near_thrs}
\end{figure*}

To find the optimal \cnot schedule, we evaluate $p_\mr{fail}$ by using the concatenated MWPM decoder when $d=T=7$ for the above 292 valid length-7 \cnot schedules under the circuit-level noise model of strength $p = 10^{-3}$.
We select $p = 10^{-3}$ since we mainly concern the sub-threshold performance of our decoder when $p \leq 10^{-3}$.
The evaluated values of $p_\mr{fail}$ are plotted in ascending order in Fig.~\ref{fig:cnot_schedules_comparison}(a).
In addition, Fig.~\ref{fig:cnot_schedules_comparison}(b) shows the distribution of $p_\mr{fail}^{(Z)}$ and $p_\mr{fail}^{(X)}$ for these schedules.
We select the optimal \cnot schedule as the one that gives the smallest $p_\mr{fail}$ among the schedules having $\abs{R_\mr{bias}} \leq 0.02 \approx \log_{10} 1.05$, which is visualized as a gray area in Fig.~\ref{fig:cnot_schedules_comparison}(b).
The selected \cnot schedule is
\begin{align}
    \mathcal{A} = [2, 3, 6, 5, 4, 1; 3, 4, 7, 6, 5, 2],
    \label{eq:optimal_cnot_schedule}
\end{align}
which is marked as red dots in Figs.~\ref{fig:cnot_schedules_comparison}(a) and~(b) and visualized in Fig.~\ref{fig:cnot_schedules_comparison}(c).
The corresponding failure rates and bias (99\% CI) are
\begin{align*}
    p_\mr{fail} ={}& (1.437 \pm 0.005) \times 10^{-3}, \\
    p_\mr{fail}^{(Z)} = p_\mr{fail}^{(X)} ={}& (7.19 \pm 0.04) \times 10^{-4}, \\
    R_\mr{bias} ={}& 0.000 \pm 0.003.
\end{align*}
Note that the failure rate for the worst-performing \cnot schedule is $p_\mr{fail} = (2.9 \pm 0.1) \times 10^{-3}$, which is approximately twice that of the best case.
In Appendix~\ref{app:bias_origin}, we discuss the origin of the bias and explain why the degree of this bias varies with the \cnot schedule.

Although we here choose the schedule to have a sufficiently small bias, biased schedules may be intentionally used depending on the task.
For instance, in the 15-to-1 magic state distillation scheme \cite{bravyi2005universal}, $\ov{X}$ errors on ancillary logical qubits are more detrimental than $\ov{Z}$ errors, as the latter can be detected by the final $\ov{X}$ measurements \cite{litinski2019magic}.
Therefore, using \cnot schedules biased towards $\ov{Z}$ errors (i.e., $R_\mr{bias} > 0$) for the ancillary logical qubits can be advantageous.


\subsection{Performance analysis}

We now analyze the performance of the decoder under circuit-level noise models when the \cnot schedule is set to the optimal one in Eq.~\eqref{eq:optimal_cnot_schedule}.
As in Sec.~\ref{subsec:bitflip_performance_analysis}, we first examine the near-threshold region to determine the cross threshold and then investigate the sub-threshold scaling of the logical failure rate.

For near-threshold simulations, we consider $T \in \qty{1, 2, 4, 8, 16, 32}$ rounds of the logical idling gate with code distance $d \in \qty{3, 5, 7, \cdots, 29}$.
The evaluated logical failure rates $p_\mr{fail}$ are plotted over the noise strength $p$ in Fig.~\ref{fig:circuit_numerical_analysis_near_thrs}(a) for each $T$ and $d$.
Unlike the case of bit-flip noise plotted in Fig.~\ref{fig:bitflip_numerical_analysis}(a), the regression lines for each value of $T$ do not clearly intersect at a single point.
Therefore, to determine the cross threshold $\CrossThrs{circuit}(T)$, we evaluate $\CrossThrsVar{circuit} (d_1, T)$, which is the crossing of the two lines of $d = d_1$ and $d = (d_1 + 1)/2$ for $d_1 \equiv 1 \pmod{4}$, while varying $d_1$ and fit the data into the ansatz
\begin{align*}
    \CrossThrsVar{circuit} (d_1, T) = \frac{A}{d_1} + \CrossThrs{circuit}(T),
\end{align*}
where $A$ is a free parameter.
The values of $\CrossThrsVar{circuit} (d_1, T)$ for $d_1 \in \qty{13, 17, 21, 25, 29}$ are highlighted in dark color in Fig.~\ref{fig:circuit_numerical_analysis_near_thrs}(a) and plotted in Fig.~\ref{fig:circuit_numerical_analysis_near_thrs}(b) against $1/d_1$ with their linear regressions.
The obtained cross thresholds $\CrossThrs{circuit}(T)$ are visualized in Fig.~\ref{fig:circuit_numerical_analysis_near_thrs}(c).
By fitting them into Eq.~\eqref{eq:long_term_thrs_ansatz}, we get the long-term cross threshold $\CrossThrsLT{circuit} \approx 0.455\%$ with $\gamma \approx 1.01$.

\begin{figure}[!t]
	\centering
	\includegraphics[width=\linewidth]{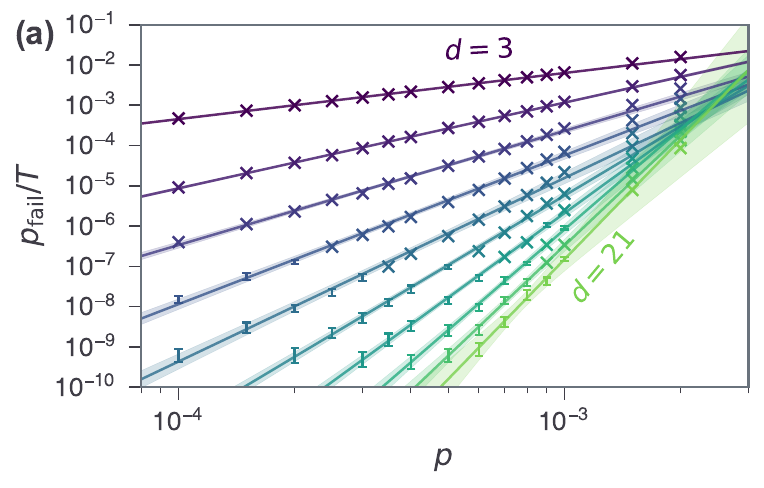}
    \includegraphics[width=\linewidth]{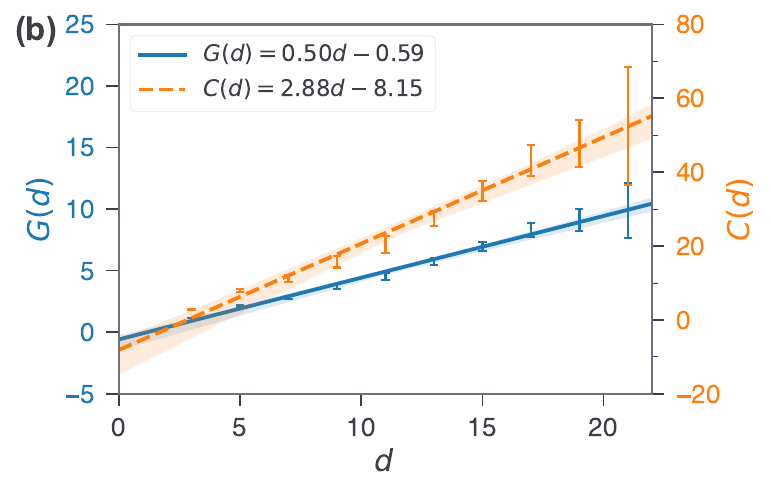}
	\caption{
        \textbf{Numerical analysis of the decoder under sub-threshold circuit-level noise.}
        We consider $T = 4d$ rounds of the logical idling gate of the triangular color code with code distance $d \in \qty{3, 5, \cdots, 21}$ under the circuit-level noise model of strength $p$.
        \subfig{a} Logical failure rates per round ($p_\mr{fail}/T$) are plotted over $p$ for various code distances.
        Each data point is marked as an error bar indicating its 99\% CI or as an `X' symbol if its relative margin of error (i.e., the ratio of half the width of its CI to its center) is less than 10\%.
        For each $d$, the data within the range of $p \leq 10^{-3}$ are used for the regression in the form of Eq.~\eqref{eq:log_pfail_ansatz}, which is drawn as a solid line.
        The 99\% CIs of the regression estimates are depicted as shaded regions around the lines.
        \subfig{b} The 99\% CIs of $G(d)$ and $C(d)$ in Eq.~\eqref{eq:log_pfail_ansatz} are plotted over $d$.
        Their linear regressions are shown as solid/dashed lines with shaded regions indicating the 99\% CIs of the estimates.
    }
	\label{fig:pfails_low_errors_circuit}
\end{figure}

We next analyze the sub-threshold scaling of logical failure rates.
We consider $T = 4d$ rounds of the logical idling gate with code distance $d \in \qty{3, 5, 7, \cdots, 21}$ and compute its logical failure rate per round ($p_\mr{fail}/T$) as presented in Fig.~\ref{fig:pfails_low_errors_circuit}(a).
For each $d$, the data with $p \leq 10^{-3}$ are fitted into the ansatz of Eq.~\eqref{eq:log_pfail_ansatz}, whose gradient $G(d)$ and constant $C(d)$ are plotted in Fig.~\ref{fig:pfails_low_errors_circuit}(b).
We obtain the regressions of $G(d)$ and $C(d)$ against $d$ as
\begin{align*}
    G(d) ={}& (0.50 \pm 0.05) (d - 12) + (5.4 \pm 0.3), \\
    C(d) ={}& (2.9 \pm 0.4) (d - 12) + (26 \pm 2),
\end{align*}
by selecting $d_0 = 12$, where the error terms are the 99\% CIs of the estimations.
The corresponding ansatz parameters in Eq.~\eqref{eq:pfail_ansatz} are
\begin{align*}
    \begin{split}
        &\ScalingThrs{circuit} \approx 3.2 \times 10^{-3}, \quad \alpha \approx 8.4 \times 10^{-3}, \\ 
        &\beta \approx 0.50, \quad \eta \approx 5.4, \quad d_0 = 12.
    \end{split}
\end{align*}

\begin{figure*}[!t]
	\centering
	\includegraphics[width=\textwidth]{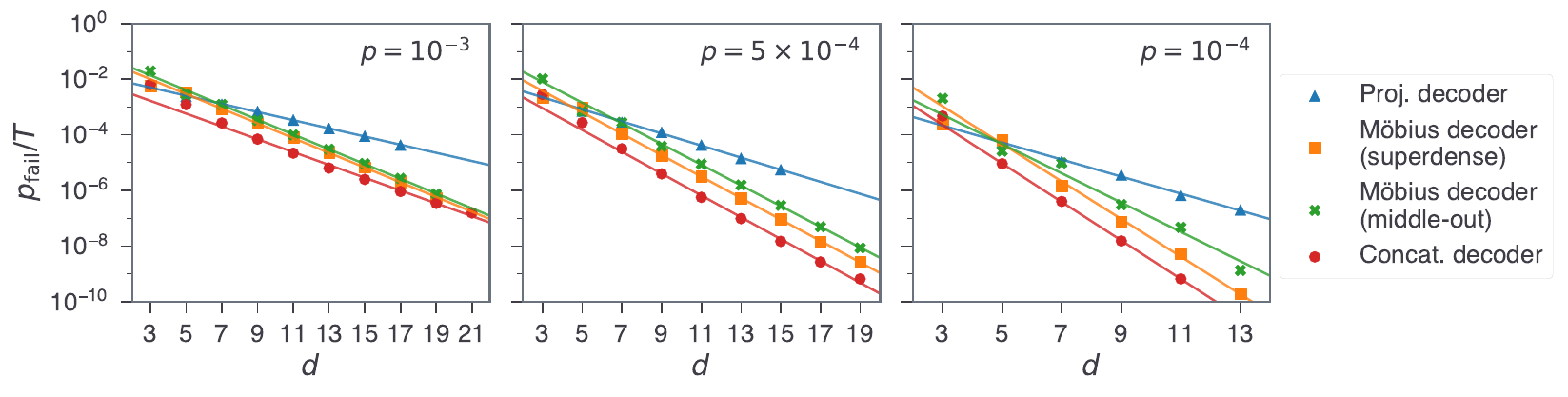}
	\caption{
        \textbf{Comparison of three matching-based decoders under circuit-level noise.}
        Logical failure rates per round $p_\mr{fail}/T$ are estimated by using three different decoders (projection, Möbius, and concatenated MWPM decoders) across different code distances $d$ at varying noise strengths $p \in \qty{10^{-3}, 5 \times 10^{-4}, 10^{-4}}$.
        The data for the projection decoder are from Fig.~17 of Ref.~\cite{beverland2021cost}.
        The data for the Möbius decoder are from Figs.~11 and~12 of Ref.~\cite{gidney2023new}, which respectively correspond to the superdense and middle-out color code circuits (two types of syndrome extraction circuits proposed in Ref.~\cite{gidney2023new}).
        The solid lines represent the linear regressions of $\log (p_\mr{fail}/T)$ against $d$.
    }
	\label{fig:decoder_comparison_circuit}
\end{figure*}

We lastly compare the performance of the concatenated MWPM decoder with that of previous decoders: the projection and Möbius decoders.
In Fig.~\ref{fig:decoder_comparison_circuit}, we present the logical failure rates per round $p_\mr{fail}/T$ estimated by using these decoders across different code distances $d$ at three noise strengths $p \in \qty{10^{-3}, 5 \times 10^{-4}, 10^{-4}}$.
The data for the projection and Möbius decoders are originated from Refs.~\cite{beverland2021cost} and~\cite{gidney2023new}, respectively.
The figure shows that the projection decoder significantly underperforms the other two due to its suboptimal scaling against $d$.
The scaling against $d$ is comparably similar for the other two decoders (when $p \lessapprox 5 \times 10^{-4}$); however, the concatenated MWPM decoder achieves logical failure rates that are approximately 3--7 times lower than the Möbius decoder.

In Figs.~\ref{fig:near_threshold_paulis}--\ref{fig:bias_scatter} of Appendix~\ref{app:more_numerical_analyses}, we present separate analyses of the $\ov{Z}$- and $\ov{X}$-failure rates of the decoder, together with analysis of the bias $R_\mr{bias}$ defined in Eq.~\eqref{eq:bias_coeff}, while setting the \cnot schedule to Eq.~\eqref{eq:optimal_cnot_schedule} as well.
The results are summarized as follows:
The long-term cross thresholds are estimated as $0.460\%$ for the $\ov{Z}$-failure and $0.452\%$ for the $\ov{X}$-failure, which differ by about 1.8\% (0.008\%p).
However, this difference likely falls within the range of statistical error, as the 99\% CIs for the estimations of $R_\mr{bias}$ almost always include zero, implying that the null hypothesis of $R_\mr{bias} = 0$ cannot be rejected at the 99\% significance level for these cases.

\section{Remarks \label{sec:remarks}}

In this work, we introduced the concatenated minimum-weight perfect matching (MWPM) decoder processed by the concatenation of two rounds of MWPM per color, which is applicable not only to simple bit-flip noise but also to realistic circuit-level noise.
The decoder is based on the idea that the outcome obtained from decoding on a restricted lattice can serve as additional `virtual syndrome data', which undergo a subsequent decoding round in conjunction with remaining syndrome data.

We numerically analyzed the performance of the decoder in various aspects: We considered both bit-flip and circuit-level noise models and investigated near-threshold and sub-threshold behaviors of the logical failure rate $p_\mr{fail}$.
We found that the decoder has the thresholds of 8.2\% for bit-flip noise and 0.46\% for circuit-level noise, which are comparable with those of previous matching-based decoders such as the projection decoder \cite{delfosse2014decoding,chamberland2020triangular,beverland2021cost,kubica2023efficient,zhang2024facilitating} and Möbius MWPM decoder \cite{sahay2022decoder,gidney2023new}.
Remarkably, we verified that the decoder approaches a scaling of $p_\mr{fail} \sim p^{d/2}$, where $p$ is the noise strength and $d$ is the code distance, at least within our simulation range ($d \lessapprox 31$ for bit-flip noise and $d \lessapprox 21$ for circuit-level noise).
As a consequence, it outperforms previous matching-based decoders in terms of their sub-threshold failure rates across both bit-flip and circuit-level noise, as visualized in Figs.~\ref{fig:decoder_comparison_bitflip} and~\ref{fig:decoder_comparison_circuit}.
We therefore anticipate that our decoder enhances the practicality of employing color codes in quantum computing, which has been considered less viable than surface codes due to its logical failure rate performance despite its advantage in resource efficiency \cite{thomsen2024lowoverhead}.
We distributed a python module implementing the decoder on Github \cite{colorcodestim} so that other researchers can use it.

We consider several future directions related to this work.
All the analyses in this work are based on the logical idling gate, which is generally suitable for initial studies of a decoder.
However, to use it in real implementations, we should also consider nontrivial operations.
In particular, it will be worth investigating how the decoder should be modified to handle domain walls required for lattice surgery \cite{kesselring2024anyon}.
Additionally, syndrome extraction circuits may be able to be optimized further beyond the simple circuit in Fig.~\ref{fig:syndrome_extraction_circuit}.
For example, Ref.~\cite{gidney2023new} suggests two circuits: superdense and middle-out circuits.
Superdense circuits use two ancilla qubits per face, which are prepared in a Bell pair so that superdense coding can be employed.
Middle-out circuits do not have ancilla qubits, thereby significantly reducing resource overheads at the cost of a slight decrease in fault tolerance.
It will be interesting to see how the concatenated MWPM decoder performs with such circuits.

Another promising direction for future research is to apply our concatenation approach to the Möbius decoder \cite{sahay2022decoder,gidney2023new}. 
Specifically, instead of considering the three colors independently, we could run the first-round MWPM on the unified lattice, as done in the Möbius decoder. 
A key challenge lies in constructing the lattice for the second-round MWPM and lifting the first-round MWPM outcomes from the Möbius decoder to this lattice.
We have two preliminary ideas for this: We could either (i) combine the three monochromatic lattices by introducing extra edges to account for errors at the boundaries between different colors, or (ii) keep each monochromatic lattice independent and lift each boundary error to two nodes belonging to distinct monochromatic lattices. 
Developing and testing these ideas further would be a worthwhile direction for future work.

\section*{Acknowledgements}

We thank Felix Thomsen, Nicholas Fazio, Samuel C. Smith, Benjamin J. Brown, Michael E. Beverland, and Kaavya Sahay for helpful discussions and comments. This work is supported by the Australian Research Council via the Centre of Excellence in Engineered Quantum Systems (EQUS) project number CE170100009.  This article is based upon work supported by the Defense Advanced Research Projects Agency (DARPA) under Contract No.\ HR001122C0063. Any opinions, findings and conclusions or recommendations expressed in this article are those of the author(s) and do not necessarily reflect the views of the Defense Advanced Research Projects Agency (DARPA).

\bibliographystyle{quantum}
\bibliography{bibliography}

\widetext

\appendix

\section{Proof of the validity of the concatenated MWPM decoder \label{app:validity_proof}}

In this appendix, we formally prove that the 2D version of the concatenated MWPM decoder described in Sec.~\ref{sec:decoder_bit_flip} indeed works well.
Namely, we show that the decoder always identifies a proper prediction of $X$ errors consistent with the syndrome.
We additionally validate that any outcome obtained from the projection decoder is a valid matching for the concatenated MWPM decoder, which implies that the former cannot outperform the latter.
We here use the same mathematical notations as used in Sec.~\ref{sec:decoder_bit_flip}.

\subsection{Notations and definitions \label{subsec:validity_proof_notations}}

Let $\mathcal{L}$ be a trivalent three-colorable lattice. 
We assume that the lattice does not have boundaries; we will consider boundaries later.
We consider the lattices: the dual lattice $\mathcal{L}^*$, the decoding hypergraph $\mathcal{H}$, the red/green/blue-restricted lattice $\LatticeRest{\rbs/\neg \gbs/\neg \bbs}$, the red/green/blue-only lattice $\LatticeMono{\rbs/\gbs/\bbs}$.

Let $\mathcal{L}^*$ be the dual lattice of $\mathcal{L}$, whose vertices are three-colorable so that endpoints of an edge are distinctly colored and whose faces are of the form $\qty{\qty{v_\rbs,v_\gbs},\qty{v_\gbs,v_\bbs},\qty{v_\bbs,v_\rbs}}$, where $v_\rbs$, $v_\gbs$, and $v_\bbs$ are respectively red, green, and blue vertices. 
The color of an edge is chosen to be distinct from the colors of its endpoints. 
We associate with the dual lattice its cellular homology complex $(C_0(\mathcal{L}^*),C_1(\mathcal{L}^*),C_2(\mathcal{L}^*))$, where $C_0(\mathcal{L}^*)$ is an $\mathbbm{F}_2$ (binary) vector space with the basis set $\Delta_0(\mathcal{L}^*)$, and accordingly for $C_1(\mathcal{L}^*)$, $\Delta_1(\mathcal{L}^*)$ and $C_2(\mathcal{L}^*)$, $\Delta_2(\mathcal{L}^*)$.
We equip the complex with linear boundary maps $\partial_1^*: C_1(\mathcal{L}^*) \rightarrow C_0(\mathcal{L}^*)$ and $\partial_2^*: C_2(\mathcal{L}^*) \rightarrow C_1(\mathcal{L}^*)$ such that $\partial_1^*(\qty{u,v}) = u+v$ and $\partial_2^*(\qty{e_1,e_2,...,e_m}) = e_1+e_2+...+e_m$.
For each color $\cbs \in \qty{\rbs, \gbs, \bbs}$, we denote $\latelmcolor{\mathcal{L}^*}{0}{\cbs} = \qty{v \in \latelm{\mathcal{L}^*}{0} \;\vert\; \text{$v$ is \cbs-colored}}$ and accordingly for $\latelmcolor{\mathcal{L}^*}{1}{\cbs}$.

The decoding hypergraph $\mathcal{H}$ has vertices $\mathcal{V} = \Delta_0 (\mathcal{L}^*)$, hyper-edges $\varepsilon = \qty{ \cup_{e \in f} e \; \vert \; f \in \Delta_2 (\mathcal{L}^*) }$ (such that each hyper-edge is of the form $\qty{v_\rbs,v_\gbs,v_\bbs}$), and hyper-faces $\mathcal{F} = \qty{ \qty{e \in \Delta_1 (\mathcal{L}^*) \;\vert\; v \in e} \;\vert\; v \in \Delta_0(\mathcal{L}^*)}$. 
On this lattice, we define a 2-complex with an $\mathbbm{F}_2$ vector space $C_0(\mathcal{H})$ spanned by the $\mathcal{V}$, $C_1(\mathcal{H})$ spanned by $\varepsilon$, and $C_2(\mathcal{H})$ spanned by $\mathcal{F}$. The boundary maps on this complex are defined as
\begin{align*}
    \partial_1^\mathcal{H} \qty(\qty{v_\rbs,v_\gbs,v_\bbs}) &= v_\rbs+v_\gbs+v_\bbs, \\
    \partial_2^\mathcal{H} \qty(\qty{e_1,e_2, \cdots, e_m}) &= e_1+e_2 + \cdots + e_m.
\end{align*}

The red-restricted lattice $\LatticeRest{\rbs}$ is the subgraph of $\mathcal{L}^*$ over vertices $\Delta_0(\LatticeRest{\rbs}) = \latelmcolor{\mathcal{L}^*}{0}{\gbs} \cup \latelmcolor{\mathcal{L}^*}{0}{\bbs}$. 
We associate it with its cellular homology complex $C_0(\LatticeRest{\rbs})$. 
We define a linear projection operator $\pi_0^{(\rbs)}: C_0(\mathcal{L}^*) \rightarrow C_0(\mathcal{L}^*)$ that satisfies
\begin{align*}
\pi_0^{(\rbs)}(v) = \begin{cases} 
      v & \text{if } v \notin \latelmcolor{\mathcal{L}^*}{0}{\rbs}, \\
      0 & \text{otherwise}
   \end{cases}
\end{align*}
for each $v \in \latelm{\mathcal{L}^*}{0}$ such that $\Im(\pi_0^{(\rbs)}) = C_0(\LatticeRest{\rbs})$.

The red-only lattice $\LatticeMono{\rbs}$ is defined as
\begin{align*}
    \Delta_0\qty(\LatticeMono{\rbs}) &= \latelmcolor{\mathcal{L}^*}{0}{\rbs} \cup \latelmcolor{\mathcal{L}^*}{1}{\rbs}, \\
    \Delta_1\qty(\LatticeMono{\rbs}) &= \qty{\qty{v_\rbs,\qty{v_\gbs,v_\bbs}} \subset \Delta_0\qty(\LatticeMono{\rbs}) \;\vert\;\ \qty{v_\rbs,v_\gbs,v_\bbs} \in \varepsilon}, \\
    \Delta_2\qty(\LatticeMono{\rbs}) &= \left\{ \qty{ \qty{v_\rbs,\qty{v_\gbs,v_\bbs}} \in \Delta_1\qty(\LatticeMono{\rbs}) \;\vert\; v \in \qty{v_\rbs,v_\gbs,v_\bbs}} \;\middle\vert\; v \in \latelmcolor{\mathcal{L}^*}{0}{\gbs} \cup \latelmcolor{\mathcal{L}^*}{0}{\bbs}\right\}, \\
\end{align*}
which respectively span $C_0\qty(\LatticeMono{\rbs})$, $C_1\qty(\LatticeMono{\rbs})$, and $C_2\qty(\LatticeMono{\rbs})$ that form its cellular homology complex.
Note that $C_1(\LatticeMono{\rbs})$ and $C_1(\mathcal{H})$ are isomorphic under the mapping of bases $\qty{v_\rbs,\qty{v_\gbs,v_\bbs}} \leftrightarrow \qty{v_\rbs,v_\gbs,v_\bbs}$.
Therefore, we will treat edges in the red-only lattice as hyper-edges when considering complexes.
We also note that $C_0(\LatticeMono{\rbs}) = C_1(\LatticeRest{\rbs}) \;\oplus\; \Im(\identity - \pi^{(\rbs)}_0)$.
We define $P_{C_1(\LatticeRest{\rbs})}$ and $P_{\Im(\identity - \pi_0^{(\rbs)})}$ as the projectors from $C_0(\LatticeMono{\rbs})$ onto these two subspaces $C_1(\LatticeRest{\rbs})$ and $\Im(\identity - \pi_0^{(\rbs)})$, respectively.
The boundary map for 1-cells,
\begin{align*}
\partial_1^{\LatticeMono{\rbs}}\qty(\qty{v_\rbs,v_\gbs,v_\bbs}) &= v_\rbs + \qty{v_\gbs,v_\bbs},
\end{align*}
can be separated into two functions $p_\mr{vert}^{(\rbs)}$ and $p_\mr{edge}^{(\rbs)}$ as
\begin{subequations}
\begin{align}
    p_\mr{vert}^{(\rbs)} &= P_{\Im(\identity - \pi_0^{(\rbs)})}\partial_1^{\LatticeMono{\rbs}}, \\
    p_\mr{edge}^{(\rbs)} &= P_{C_1(\LatticeRest{\rbs})}\partial_1^{\LatticeMono{\rbs}}, \\
    \partial_1^{\LatticeMono{\rbs}} &= p_\mr{vert}^{(\rbs)} + p_\mr{edge}^{(\rbs)}. \nonumber
\end{align}
\label{eq:projected_boundary_maps}%
\end{subequations}
These operators act on basis elements as
\begin{align*}
    p_\mr{vert}^{(\rbs)}(\qty{v_\rbs,v_\gbs,v_\bbs}) &= v_\rbs, \\
    p_\mr{edge}^{(\rbs)}(\qty{v_\rbs,v_\gbs,v_\bbs}) &= \qty{v_\gbs,v_\bbs}.
\end{align*}

All the above notations are naturally extended to other two colors \gbs and \bbs.

The decoding problem can be reformulated as follows: Given a set $\sigma$ of syndrome vertices in $\mathcal{H}$, identify a set of hyper-edges whose boundary is equal to the syndrome vertices.

The concatenated decoder first finds a matching $b_\rbs \in C_1(\LatticeRest{\rbs})$ in the red-restricted lattice such that $\partial_1^{\mathcal{L}^*_{\neg \rbs}} (b_\rbs) = \pi^{(\rbs)}_0(\sigma)$.
It then lifts this matching and the red vertices from the original syndrome into the red-only lattice to form a set of vertices $(\identity - \pi_0^{(\rbs)})(\sigma) + b_\rbs$, for which it then finds a matching $x$ such that $\partial_1^{\LatticeMono{\rbs}}(x) = (\identity - \pi_0^{(\rbs)})(\sigma) + b_\rbs$. This is repeated for the other two colors and the least-weight correction is selected as the final outcome, but we here only consider the red sub-decoding procedure (without loss of generality) as our goal is showing the validity of the decoder, not its optimality.

\subsection{Proof of validity when no boundaries}

We first show that, for a given syndrome and error, if each of the first- and second-round decoding processes works correctly (i.e. returns a valid matching), the concatenated MWPM decoder returns a valid matching.
Namely, we prove the following claim:

\begin{claim}
    For $\sigma \in C_0 (\mathcal{H})$, $x \in C_1 (\mathcal{H})$, and $b_\rbs \in C_1(\LatticeRest{\rbs})$, if
    \begin{subequations}
    \begin{align}
        \partial_1^{\LatticeRest{\rbs}}\qty(b_\rbs) &= \pi_0^{(\rbs)}(\sigma), \label{eq:validity_MWPM_step1} \\
        \partial_1^{\LatticeMono{\rbs}}(x) &= \qty(\identity - \pi_0^{(\rbs)})(\sigma) + b_\rbs, \label{eq:validity_MWPM_step2}
    \end{align}
    \label{eq:validity_MWPM}%
    \end{subequations}
    then 
    \begin{align}
        \partial_1^{\mathcal{H}}(x) = \sigma.
        \label{eq:validity_condition}
    \end{align}
\end{claim}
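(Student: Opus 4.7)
The plan is to reduce the claim to a single algebraic identity between the three boundary maps, combined with uniqueness of the direct-sum decomposition $C_0(\LatticeMono{\rbs}) = C_1(\LatticeRest{\rbs}) \oplus \Im(\identity - \pi_0^{(\rbs)})$ given in Sec.~\ref{subsec:validity_proof_notations}. The heart of the argument is the identity
$$\partial_1^{\mathcal{H}} \;=\; p_\mr{vert}^{(\rbs)} \;+\; \partial_1^{\LatticeRest{\rbs}} \circ p_\mr{edge}^{(\rbs)},$$
viewed on the canonically isomorphic spaces $C_1(\mathcal{H}) \cong C_1(\LatticeMono{\rbs})$. I would verify it by evaluating on a single basis element: a hyper-edge $\{v_\rbs,v_\gbs,v_\bbs\}$ is split by $p_\mr{vert}^{(\rbs)}$ and $p_\mr{edge}^{(\rbs)}$ into $v_\rbs$ and the green-blue 1-cell $\{v_\gbs,v_\bbs\} \in C_1(\LatticeRest{\rbs})$, and $\partial_1^{\LatticeRest{\rbs}}$ sends the latter to $v_\gbs + v_\bbs$; linearity extends the identity to all of $C_1(\mathcal{H})$.

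With this identity in hand, the next step would be to extract $p_\mr{vert}^{(\rbs)}(x)$ and $p_\mr{edge}^{(\rbs)}(x)$ individually from hypothesis~\eqref{eq:validity_MWPM_step2}. Because $p_\mr{vert}^{(\rbs)}(x) \in \Im(\identity - \pi_0^{(\rbs)})$ and $p_\mr{edge}^{(\rbs)}(x) \in C_1(\LatticeRest{\rbs})$ by Eqs.~\eqref{eq:projected_boundary_maps}, and since the right-hand side $(\identity - \pi_0^{(\rbs)})(\sigma) + b_\rbs$ is already displayed in the two summands (the first lies in $\Im(\identity - \pi_0^{(\rbs)})$ and the second in $C_1(\LatticeRest{\rbs})$ by construction), uniqueness of the direct-sum decomposition forces
$$p_\mr{vert}^{(\rbs)}(x) = (\identity - \pi_0^{(\rbs)})(\sigma), \qquad p_\mr{edge}^{(\rbs)}(x) = b_\rbs.$$

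To finish, I would apply the key identity to $x$, substitute these two equalities, and invoke hypothesis~\eqref{eq:validity_MWPM_step1} to rewrite $\partial_1^{\LatticeRest{\rbs}}(b_\rbs) = \pi_0^{(\rbs)}(\sigma)$. Reassembling the red and non-red pieces yields
\begin{align*}
\partial_1^{\mathcal{H}}(x) &= p_\mr{vert}^{(\rbs)}(x) + \partial_1^{\LatticeRest{\rbs}}\bigl(p_\mr{edge}^{(\rbs)}(x)\bigr) \\
&= (\identity - \pi_0^{(\rbs)})(\sigma) + \pi_0^{(\rbs)}(\sigma) = \sigma,
\end{align*}
which is precisely Eq.~\eqref{eq:validity_condition}.

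The only non-routine step is spotting and justifying the identity $\partial_1^{\mathcal{H}} = p_\mr{vert}^{(\rbs)} + \partial_1^{\LatticeRest{\rbs}} \circ p_\mr{edge}^{(\rbs)}$. It is the algebraic statement of the concatenation principle itself: the three endpoints of a hyper-edge in the decoding hypergraph can be recovered as the red endpoint of the corresponding edge in $\LatticeMono{\rbs}$ together with the two endpoints of its green-blue counterpart in $\LatticeRest{\rbs}$. Once this is observed, the rest is pure bookkeeping on the direct sum.
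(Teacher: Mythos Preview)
Your proposal is correct and follows essentially the same argument as the paper: both proofs split hypothesis~\eqref{eq:validity_MWPM_step2} into $p_\mr{vert}^{(\rbs)}(x) = (\identity - \pi_0^{(\rbs)})(\sigma)$ and $p_\mr{edge}^{(\rbs)}(x) = b_\rbs$ via the direct-sum decomposition, verify the identity $\partial_1^{\mathcal{H}} = p_\mr{vert}^{(\rbs)} + \partial_1^{\LatticeRest{\rbs}} \circ p_\mr{edge}^{(\rbs)}$ on basis hyper-edges, and combine with hypothesis~\eqref{eq:validity_MWPM_step1}. The only difference is ordering (you state the operator identity up front, the paper derives it at the end), and you are slightly more explicit about invoking uniqueness in the direct sum to justify the splitting.
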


\begin{proof}
    Using the projected boundary maps in Eq.~\eqref{eq:projected_boundary_maps}, we rewrite Eq.~\eqref{eq:validity_MWPM_step2} as
    \begin{align*}
        p_\mr{vert}^{(\rbs)} (x) &= \qty(\identity - \pi_0^{(\rbs)})(\sigma), \\
        p_\mr{edge}^{(\rbs)} (x) &= b_\rbs.
    \end{align*}
    From the latter equation we get $\partial_1^{\LatticeRest{\rbs}} p_\mr{edge}^{(\rbs)} (x) = \pi_0^{(\rbs)} (\sigma)$, and thus $(\partial_1^{\LatticeRest{\rbs}} p_\mr{edge}^{(\rbs)} + p_\mr{vert}^{(\rbs)}) (x) = \sigma$.
    It suffices to prove that $(\partial_1^{\LatticeRest{\rbs}}  p_\mr{edge}^{(\rbs)} + p_\mr{vert}^{(\rbs)})(x) = \partial_1^{\mathcal{H}}(x)$. 
    Since its both sides are linear in $x$, we only need to show this over the basis of hyper-edges that span $C_1(\mathcal{H})$:
    \begin{align*}
        \qty(p_\mr{vert}^{(\rbs)} + \partial_1^{\LatticeRest{\rbs}} p_\mr{edge}^{(\rbs)})\qty(\qty{v_\rbs,v_\gbs,v_\bbs}) &= v_\rbs + \partial_1^{\LatticeRest{\rbs}} \qty(\qty{v_\gbs,v_\bbs}) \\
        &= v_\rbs + v_\gbs + v_\bbs \\
        &= \partial_1^\mathcal{H}\qty(\qty{v_\rbs,v_\gbs,v_\bbs}).
    \end{align*}
\end{proof}

\subsection{Consideration of boundaries \label{subapp:boundary_consideration}}

We now assume that $\mathcal{L}$ is a color-code lattice with boundaries.
To handle them, its dual lattice $\mathcal{L}^*$ is augmented with three distinctly colored boundary vertices $\BdryVert{\rbs}$, $\BdryVert{\gbs}$, and $\BdryVert{\bbs}$.
Vertices in $\mathcal{L}^*$ corresponding to green and blue faces of $\mathcal{L}$ adjacent to the red boundary are connected to $\BdryVert{\rbs}$, and so on for the other two colors, and all pairs of boundary vertices are connected.

Despite the existence of the boundary vertices, almost all the notations and definitions in Appendix~\ref{subsec:validity_proof_notations} do not change.
Denoting the set of boundary vertices of a lattice $\mathcal{L}$ as $V_\mr{bdry}(\mathcal{L})$, the boundary vertices of various lattices described above are set as follows:
\begin{align*}
    V_\mr{bdry}\qty(\mathcal{L}^*) = V_\mr{bdry}\qty(\mathcal{H}) &= \qty{\BdryVert{\rbs}, \BdryVert{\gbs}, \BdryVert{\bbs}}, \\
    V_\mr{bdry}\qty(\LatticeRest{\rbs}) &= \qty{\BdryVert{\gbs}, \BdryVert{\bbs}},  \\
    V_\mr{bdry}\qty(\LatticeMono{\rbs}) &= \qty{\BdryVert{\rbs}, \qty{\BdryVert{\gbs}, \BdryVert{\bbs}}},.
\end{align*}
We define $C_\mr{bdry}\qty(\mathcal{L})$ as the $\mathbbm{F}_2$ vector space spanned by $V_\mr{bdry}\qty(\mathcal{L})$.
Note that, from a decoding perspective, it does not matter if we contract multiple boundary vertices into a single boundary vertex since edges connecting them are given zero weight when applying MWPM; we thus displayed only one boundary vertex in Figs.~\ref{fig:color_code} and~\ref{fig:decoder_2D} for simplicity.
However, we here do not merge distinct boundary vertices for mathematical clarity.

Since syndromes can be matched with the boundary vertices, the mathematical description of MWPM in Eq.~\eqref{eq:validity_MWPM} and the validity condition of decoding in Eq.~\eqref{eq:validity_condition} should be modified.
The modified claim and its proof are as follows:
\begin{claim}
    For $\sigma \in C_0 (\mathcal{H})$, $x \in C_1 (\mathcal{H})$, and $b_\rbs \in C_1(\LatticeRest{\rbs})$, if 
    \begin{subequations}
    \begin{align}
        &\partial_1^{\LatticeRest{\rbs}}\qty(b_\rbs) - \pi_0^{(\rbs)}(\sigma) \in C_\mr{bdry}\qty(\LatticeRest{\rbs}), \label{eq:validity_MWPM_bdry_step1}\\
        &\partial_1^{\LatticeMono{\rbs}}(x) - \qty(\identity - \pi_0^{(\rbs)})(\sigma) - b_\rbs \in C_\mr{bdry}\qty(\LatticeMono{\rbs}), \label{eq:validity_MWPM_bdry_step2}
    \end{align}
    \label{eq:validity_MWPM_bdry}
    \end{subequations}
    then $\partial_1^{\mathcal{H}}(x) - \sigma \in C_\mr{bdry}(\mathcal{H})$.
\end{claim}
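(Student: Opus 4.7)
The plan is to generalize the no-boundary argument by keeping careful track of where boundary-slack terms appear and showing that every such term lies in $C_\mr{bdry}(\mathcal{H})$. The crucial observation is that the basis identity $(\partial_1^{\LatticeRest{\rbs}} p_\mr{edge}^{(\rbs)} + p_\mr{vert}^{(\rbs)})(\{v_\rbs,v_\gbs,v_\bbs\}) = v_\rbs + v_\gbs + v_\bbs = \partial_1^{\mathcal{H}}(\{v_\rbs,v_\gbs,v_\bbs\})$ established in the previous claim is a purely local combinatorial identity on each hyper-edge basis element, and therefore continues to hold verbatim once the lattice carries the added boundary vertices $\BdryVert{\rbs},\BdryVert{\gbs},\BdryVert{\bbs}$.

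First, I would decompose Eq.~\eqref{eq:validity_MWPM_bdry_step2} along the splitting $C_0(\LatticeMono{\rbs}) = C_1(\LatticeRest{\rbs}) \oplus \Im(\identity - \pi_0^{(\rbs)})$ by applying the two projectors $P_{C_1(\LatticeRest{\rbs})}$ and $P_{\Im(\identity - \pi_0^{(\rbs)})}$. Since $b_\rbs \in C_1(\LatticeRest{\rbs})$ and $(\identity - \pi_0^{(\rbs)})(\sigma) \in \Im(\identity - \pi_0^{(\rbs)})$, this yields $p_\mr{vert}^{(\rbs)}(x) = (\identity - \pi_0^{(\rbs)})(\sigma) + \beta_v$ and $p_\mr{edge}^{(\rbs)}(x) = b_\rbs + \beta_e$, where $\beta_v \in \mathrm{span}\{\BdryVert{\rbs}\}$ and $\beta_e \in \mathrm{span}\{\{\BdryVert{\gbs},\BdryVert{\bbs}\}\}$ are the two projected parts of the boundary slack inside $C_\mr{bdry}(\LatticeMono{\rbs}) = \mathrm{span}\{\BdryVert{\rbs}, \{\BdryVert{\gbs},\BdryVert{\bbs}\}\}$. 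Applying $\partial_1^{\LatticeRest{\rbs}}$ to the edge equation and substituting Eq.~\eqref{eq:validity_MWPM_bdry_step1}, which I would write as $\partial_1^{\LatticeRest{\rbs}}(b_\rbs) = \pi_0^{(\rbs)}(\sigma) + \beta_0$ with $\beta_0 \in C_\mr{bdry}(\LatticeRest{\rbs})$, and then adding the vertex equation, I would obtain
\begin{equation*}
    \partial_1^{\mathcal{H}}(x) = \sigma + \beta_0 + \beta_v + \partial_1^{\LatticeRest{\rbs}}(\beta_e),
\end{equation*}
using the basis identity above to rewrite the left-hand side.

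The remaining task, which I expect to be the only genuinely new piece of bookkeeping relative to the no-boundary claim, is to verify that the three slack terms on the right-hand side all lie in $C_\mr{bdry}(\mathcal{H}) = \mathrm{span}\{\BdryVert{\rbs},\BdryVert{\gbs},\BdryVert{\bbs}\}$. The inclusions $\mathrm{span}\{\BdryVert{\rbs}\} \subseteq C_\mr{bdry}(\mathcal{H})$ and $C_\mr{bdry}(\LatticeRest{\rbs}) = \mathrm{span}\{\BdryVert{\gbs},\BdryVert{\bbs}\} \subseteq C_\mr{bdry}(\mathcal{H})$ are immediate, and since $\beta_e$ is a multiple of the single boundary edge $\{\BdryVert{\gbs},\BdryVert{\bbs}\}$ of $\LatticeRest{\rbs}$, its image $\partial_1^{\LatticeRest{\rbs}}(\beta_e)$ is a multiple of $\BdryVert{\gbs}+\BdryVert{\bbs}$, which also lies in $C_\mr{bdry}(\mathcal{H})$. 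A last sanity check is that the decomposition $\partial_1^{\LatticeMono{\rbs}} = p_\mr{vert}^{(\rbs)} + p_\mr{edge}^{(\rbs)}$ remains valid on hyper-edges incident to boundary vertices; this is automatic because both projectors are defined by their local action on basis hyper-edges and are insensitive to whether the incident vertices are interior or boundary.
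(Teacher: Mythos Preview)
Your proposal is correct and follows essentially the same approach as the paper: decompose the slack in Eq.~\eqref{eq:validity_MWPM_bdry_step2} along the splitting $C_0(\LatticeMono{\rbs}) = C_1(\LatticeRest{\rbs}) \oplus \Im(\identity-\pi_0^{(\rbs)})$, apply $\partial_1^{\LatticeRest{\rbs}}$ to the edge part and substitute Eq.~\eqref{eq:validity_MWPM_bdry_step1}, add the vertex part, invoke the basis identity $(\partial_1^{\LatticeRest{\rbs}} p_\mr{edge}^{(\rbs)} + p_\mr{vert}^{(\rbs)}) = \partial_1^{\mathcal{H}}$ on hyper-edges, and then verify that each boundary-slack contribution (your $\beta_0$, $\beta_v$, $\partial_1^{\LatticeRest{\rbs}}(\beta_e)$, the paper's $c$, $c'_\mr{vert}$, $\partial_1^{\LatticeRest{\rbs}} c'_\mr{edge}$) lies in $C_\mr{bdry}(\mathcal{H})$. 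The only cosmetic difference is that the paper phrases the basis identity as showing the difference lies in $C_\mr{bdry}(\mathcal{H})$ and then computes it to be zero, whereas you invoke it directly as an equality.
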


\begin{proof}
    Let $c \in C_\mr{bdry}(\LatticeRest{\rbs})$ and $c' = c'_\mr{vert} + c'_\mr{edge} \in C_\mr{bdry}(\LatticeMono{\rbs})$ denote the left-hand sides of Eqs.~\eqref{eq:validity_MWPM_bdry_step1} and~\eqref{eq:validity_MWPM_bdry_step2}, respectively, where $c'_\mr{vert} = P_{\Im (\identity - \pi_0^{(\rbs)})} c' \in \qty{0, \BdryVert{\rbs}}$ and $c'_\mr{edge} = P_{C_1(\LatticeRest{\rbs})} c' \in \qty{0, \qty{\BdryVert{\gbs}, \BdryVert{\bbs}}}$.
    We rewrite Eq.~\eqref{eq:validity_MWPM_bdry_step2} as
    \begin{align*}
        p_\mr{vert}^{(\rbs)} (x) &= \qty(\identity - \pi_0^{(\rbs)})(\sigma) + c'_\mr{vert}, \\
        p_\mr{edge}^{(\rbs)} (x) &= b_\rbs + c'_\mr{edge}.
    \end{align*}
    From the latter equation we get $\partial_1^{\LatticeRest{\rbs}} p_\mr{edge}^{(\rbs)} (x) = \pi_0^{(\rbs)} (\sigma) + c + \partial_1^{\LatticeRest{\rbs}} c'_\mr{edge}$, and thus $(\partial_1^{\LatticeRest{\rbs}} p_\mr{edge}^{(\rbs)} + p_\mr{vert}^{(\rbs)}) (x) = \sigma + c + c'_\mr{vert} + \partial_1^{\LatticeRest{\rbs}} c'_\mr{edge}$.
    Since $c + c'_\mr{vert} + \partial_1^{\LatticeRest{\rbs}} c'_\mr{edge} \in C_\mr{bdry}(\mathcal{H})$ (as $\partial_1^{\LatticeRest{\rbs}} c'_\mr{edge}$ is either 0 or $\BdryVert{\gbs} + \BdryVert{\bbs}$), it suffices to prove that
    \begin{align*}
        (\partial_1^{\LatticeRest{\rbs}}  p_\mr{edge}^{(\rbs)} + p_\mr{vert}^{(\rbs)} - \partial_1^{\mathcal{H}})(x) \in C_\mr{bdry}\qty(\mathcal{H}).
    \end{align*}
    Because of linearity, we only need to show this over the basis of hyper-edges that span $C_1(\mathcal{H})$:
    \begin{align*}
        &\qty(p_\mr{vert}^{(\rbs)} + \partial_1^{\LatticeRest{\rbs}} p_\mr{edge}^{(\rbs)} - \partial_1^{\mathcal{H}})\qty(\qty{v_\rbs,v_\gbs,v_\bbs}) \\ 
        ={}& v_\rbs + \partial_1^{\LatticeRest{\rbs}} \qty(\qty{v_\gbs,v_\bbs}) - \qty(v_\rbs + v_\gbs + v_\bbs) \\
        ={}& 0 \in C_\mr{bdry}\qty(\mathcal{H}).
    \end{align*}
\end{proof}

\subsection{Superiority of the concatenated MWPM decoder over the projection decoder}

In addition, we show that the projection decoder cannot outperform the concatenated MWPM decoder.
In other words, we analytically prove that the matching obtained from the concatenated MWPM decoder always has a weight not greater than that from the projection decoder.
We here only consider the cases without boundaries.

Given a syndrome $s$, the projection decoder finds matchings $b_\rbs$, $b_\gbs$, and $b_\bbs$ in the restricted lattices such that $\partial_1^{\mathcal{L}^*_{\neg \cbs}} (b_\cbs) = \pi^{(\cbs)}_0(\sigma)$ for each $\cbs \in \qty{\rbs, \gbs, \bbs}$. 
It then lifts the three restricted lattice matchings to find a hypergraph matching $x \in C_1(\mathcal{H})$ such that $\partial_2^*(x) = b_\rbs + b_\gbs + b_\bbs$. 

The following claim asserts that, for the projection and concatenated MWPM decoders sharing the same result on the red-restricted lattice, any matching on the projection decoder must also be a proper matching on the red-only lattice of the concatenated MWPM decoder. 
Hence, running MWPM on the red-only lattice always gives a matching whose weight is equal to or less than the wieght of any matching obtained from the projection decoder.

\begin{claim}
    Given $\sigma \in C_0(\mathcal{H})$, $x \in C_1(\mathcal{H})$, and $b_\cbs \in C_1(\LatticeRest{\cbs})$ for each $\cbs \in \qty{\rbs, \gbs, \bbs}$, suppose
    \begin{align*}
        \partial_1^{\LatticeRest{\cbs}} \qty(b_\cbs) &= \pi_0^{(\cbs)}(\sigma)\quad \forall \cbs \in \qty{\rbs, \gbs, \bbs}, \\
        \partial_2^* (x) &= b_\rbs + b_\gbs + b_\bbs, \\
        \partial_1^\mathcal{H} (x) &= \sigma.
    \end{align*}
    Then $\partial_1^{\LatticeMono{\rbs}}(x) = b_\rbs + \qty(\identity - \pi_0^{(\rbs)}) (\sigma)$.
\end{claim}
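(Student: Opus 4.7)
The plan is to exploit the decomposition $\partial_1^{\LatticeMono{\rbs}} = p_\mr{vert}^{(\rbs)} + p_\mr{edge}^{(\rbs)}$ introduced in Eq.~\eqref{eq:projected_boundary_maps}, and verify the two components of the target equation separately by reducing each to a straightforward computation on basis elements (hyper-edges of $\mathcal{H}$, equivalently faces of $\mathcal{L}^*$). Since the image of $p_\mr{vert}^{(\rbs)}$ lies in $\Im(\identity - \pi_0^{(\rbs)})$ while the image of $p_\mr{edge}^{(\rbs)}$ lies in $C_1(\LatticeRest{\rbs})$, the equation $\partial_1^{\LatticeMono{\rbs}}(x) = b_\rbs + (\identity - \pi_0^{(\rbs)})(\sigma)$ is equivalent to the pair of equations $p_\mr{vert}^{(\rbs)}(x) = (\identity - \pi_0^{(\rbs)})(\sigma)$ and $p_\mr{edge}^{(\rbs)}(x) = b_\rbs$.

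For the first equation, I would observe that on each basis hyper-edge $\{v_\rbs, v_\gbs, v_\bbs\}$ one has $p_\mr{vert}^{(\rbs)}(\{v_\rbs, v_\gbs, v_\bbs\}) = v_\rbs$, whereas $(\identity - \pi_0^{(\rbs)}) \partial_1^\mathcal{H}(\{v_\rbs, v_\gbs, v_\bbs\}) = (\identity - \pi_0^{(\rbs)})(v_\rbs + v_\gbs + v_\bbs) = v_\rbs$, using that $\pi_0^{(\rbs)}$ kills exactly the red vertices. By linearity, $p_\mr{vert}^{(\rbs)} = (\identity - \pi_0^{(\rbs)}) \circ \partial_1^\mathcal{H}$, and therefore $p_\mr{vert}^{(\rbs)}(x) = (\identity - \pi_0^{(\rbs)})(\sigma)$ by the assumption $\partial_1^\mathcal{H}(x) = \sigma$.

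For the second equation, I would similarly observe that $p_\mr{edge}^{(\rbs)}$ acting on $\{v_\rbs, v_\gbs, v_\bbs\}$ returns precisely the red edge $\{v_\gbs, v_\bbs\}$ of the corresponding face of $\mathcal{L}^*$, which is exactly the red component of $\partial_2^*(\{v_\rbs, v_\gbs, v_\bbs\})$. Hence by linearity $p_\mr{edge}^{(\rbs)} = P_{C_1(\LatticeRest{\rbs})} \circ \partial_2^*$, giving
\begin{align*}
p_\mr{edge}^{(\rbs)}(x) = P_{C_1(\LatticeRest{\rbs})}(b_\rbs + b_\gbs + b_\bbs) = b_\rbs,
\end{align*}
since $b_\rbs \in C_1(\LatticeRest{\rbs})$ while $b_\gbs$ and $b_\bbs$ consist entirely of green and blue edges, respectively, which the projector annihilates. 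Adding the two identities yields the desired conclusion.

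There is no real obstacle here: the proof is essentially a bookkeeping exercise that rests on recognising the two projection maps $p_\mr{vert}^{(\rbs)}$ and $p_\mr{edge}^{(\rbs)}$ as the colour-filtered versions of $\partial_1^\mathcal{H}$ and $\partial_2^*$ respectively. The only mild subtlety is making sure the bijection between hyper-edges of $\mathcal{H}$ and faces of $\mathcal{L}^*$ is used consistently when invoking $\partial_2^*(x)$ while $x$ is being viewed as an element of $C_1(\mathcal{H})$; the linearity arguments then conclude the proof. Notably, the hypotheses $\partial_1^{\LatticeRest{\cbs}}(b_\cbs) = \pi_0^{(\cbs)}(\sigma)$ for $\cbs \in \{\gbs, \bbs\}$ are not needed for this implication, although they are needed to ensure that the projection decoder's output is self-consistent in the first place.
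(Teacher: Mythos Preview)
Your proposal is correct and follows essentially the same approach as the paper: both split the target identity into its $p_\mr{vert}^{(\rbs)}$ and $p_\mr{edge}^{(\rbs)}$ components and verify each by a basis-element computation together with linearity. The only cosmetic difference is that the paper writes $\partial_2^* = p_\mr{edge}^{(\rbs)} + p_\mr{edge}^{(\gbs)} + p_\mr{edge}^{(\bbs)}$ and $\partial_1^\mathcal{H} = p_\mr{vert}^{(\rbs)} + p_\mr{vert}^{(\gbs)} + p_\mr{vert}^{(\bbs)}$ and then invokes disjointness of the colored subspaces to isolate the red part, whereas you phrase this directly as $p_\mr{edge}^{(\rbs)} = P_{C_1(\LatticeRest{\rbs})} \circ \partial_2^*$ and $p_\mr{vert}^{(\rbs)} = (\identity - \pi_0^{(\rbs)}) \circ \partial_1^\mathcal{H}$; these are the same observation.
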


\begin{proof}
    The assertion is equivalent to
    \begin{align*}
        p_\mr{vert}^{(\rbs)} (x) &= \qty(\identity - \pi_0^{(\rbs)})(\sigma) \\
        p_\mr{edge}^{(\rbs)} (x) &= b_\rbs
    \end{align*}
    As $\partial_2^*(\qty{v_\rbs,v_\gbs,v_\bbs}) = \qty{v_\rbs,v_\gbs}+\qty{v_\gbs,v_\bbs}+\qty{v_\rbs,v_\bbs} = (p_\mr{edge}^{(\rbs)} + p_\mr{edge}^{(\gbs)} + p_\mr{edge}^{(\bbs)})(\qty{v_\rbs,v_\gbs,v_\bbs})$, applying linearity we get that $\partial_2^* (x) = (p_\mr{edge}^{(\rbs)} + p_\mr{edge}^{(\gbs)} + p_\mr{edge}^{(\bbs)})(x) = b_\rbs + b_\gbs + b_\bbs$. As the subspaces of red, green, and blue edges are disjoint, $p_\mr{edge}^{(\rbs)}(x) = b_\rbs$. 
    Similarly, $\partial_1^\mathcal{H}(\qty{v_\rbs,v_\gbs,v_\bbs}) = v_\rbs + v_\gbs + v_\bbs = (p_\mr{vert}^{(\rbs)} + p_\mr{vert}^{(\gbs)} + p_\mr{vert}^{(\bbs)})(\qty{v_\rbs,v_\gbs,v_\bbs})$. 
    By linearity, $\partial_1^\mathcal{H}(x) = (p_\mr{vert}^{(\rbs)} + p_\mr{vert}^{(\gbs)} + p_\mr{vert}^{(\bbs)})(x) = \sigma = [(\identity - \pi_0^{(\rbs)}) + (\identity - \pi_0^{(\gbs)}) + (\identity - \pi_0^{(\bbs)})](\sigma)$ as $\sum_\cbs (\identity - \pi_0^{(\cbs)}) = \identity$. 
    Again, as the subspaces of red, green, and blue vertices are disjoint, $p_\mr{vert}^{(\rbs)}(x) = (\identity - \pi_0^{(\rbs)})(\sigma)$.
\end{proof}

\section{Discussion on the origin of the bias between $\lgx$ and $\lgz$ failures \label{app:bias_origin}}

In this appendix, we discuss the origin of the bias between $\lgx$ and $\lgz$ failures and explain why the degree of this bias varies with the \cnot schedule.
Errors occurring during syndrome extraction propagate to other data qubits via \cnot gates, and the propagation pattern strongly depends on the \cnot schedule.
As a result, it is not surprising that failure rates are influenced by the schedule.
In particular, hook errors, which are errors on ancillary qubits that occur midway through syndrome extraction and propagate to more than one data qubit, can significantly impact the decoder’s performance, as they may reduce the fault distance.

To understand the origin of the bias, it is helpful to compare the optimal schedule,
\begin{align}
    \mathcal{A}_\mathrm{optimal} = [2, 3, 6, 5, 4, 1; 3, 4, 7, 6, 5, 2], \label{eq:highest_bias_schedule}
\end{align}
which exhibits a very small bias (discussed in Sec.~\ref{subsec:optimizing_cnot_schedule}), with the schedule that produces the highest bias,
\begin{align*}
    \mathcal{A}_\mathrm{biased} = [1, 6, 7, 5, 4, 2; 2, 3, 6, 7, 5, 4],
\end{align*}
where $R_\mathrm{bias} \approx 0.99$ and $p_\mathrm{fail}^{(X)} \approx 3 p_\mathrm{fail}^{(Z)}$ for $d=T=7$, $p=10^{-3}$.
The optimal schedule has the property that its $Z$-part and $X$-part follow the same pattern, differing consistently by 1.
This symmetry ensures that hook errors affect data qubits in the same way for the two ancillary qubits on a single face.
In contrast, the schedule $\mathcal{A}_\mathrm{biased}$ lacks such consistency; its $Z$- and $X$-parts are entirely different, resulting in hook errors exhibiting distinct characteristics for the two Pauli types.

\begin{figure}[!h]
	\centering
	\includegraphics[width=0.4\linewidth]{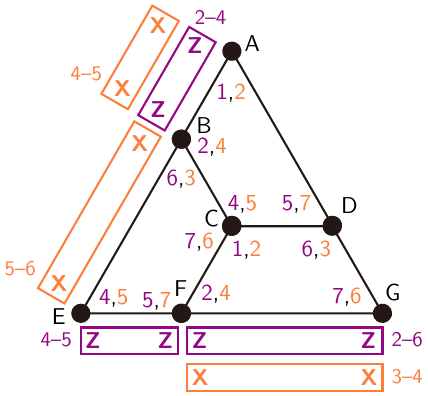}
	\caption{
        \textbf{Possible hook errors on the distance-3 color code patch with the CNOT schedule in Eq.~\eqref{eq:highest_bias_schedule}, producing the highest bias.}
        Each purple (orange) box labeled `$x$–$y$' adjacent to an edge between two qubits represents a $Z \otimes Z$ ($X \otimes X$) error on these qubits, which is caused by a Pauli-$Z$ ($X$) hook error occurring between time steps~$x$ and~$y$ on the nearby $Z$-type ($X$-type) ancillary qubit.
    }
	\label{fig:hook_errors}
\end{figure}

To illustrate this effect explicitly, Fig.~\ref{fig:hook_errors} shows a distance-3 patch for schedule $\mathcal{A}_\mathrm{biased}$, with possible hook errors visualized using purple and orange boxes.
For instance, the purple box adjacent to an edge connecting qubits~A and B (labeled `2--4' with `Z Z' inside) represents a $Z_\mathrm{A} \otimes Z_\mathrm{B}$ error caused by a hook error between time steps~2 and~4 on the nearby $Z$-type ancillary qubit (located on face A-B-C-D).
It is evident that $Z$-type hook errors are more likely to occur than $X$-type hook errors, as they involve a greater number of time steps (e.g., $Z_\mr{F} \otimes Z_\mr{G}$ involves time steps~2 to~6).
Consequently, $\lgz$ errors are expected to be more frequent than $\lgx$ errors, which aligns with the numerical results.
Generalizing this argument to higher code distances might be slightly more complex due to weight-6 checks, but the basic principles remain the same.

\section{Discussion on small-weight uncorrectable errors \label{app:least_weight_uncorrectable_error}}

In this appendix, we discuss errors with weights $< O(d/2)$ that cannot be corrected by the concatenated MWPM decoder.
We first consider weight-$O(d/3)$ errors, which may be difficult to correct via the projection decoder.
We verify that a specific family of weight-$O(d/3)$ errors that are uncorrectable via the projection decoder are correctable via the concatenated MWPM decoder.
We then show that there exist weight-$O(3d/7)$ errors that cannot be corrected by the concatenated MWPM decoder, which implies that the value of $\beta$ in the ansatz of Eq.~\eqref{eq:pfail_ansatz} may converge to $3/7$ for sufficiently large values of $d$.

\subsection{Small-weight errors uncorrectable via the projection decoder but correctable via the concatenated MWPM decoder \label{subsec:uncorrectable_errors_projection}}

\begin{figure*}[!t]
    \centering
    \includegraphics[width=0.9\textwidth]{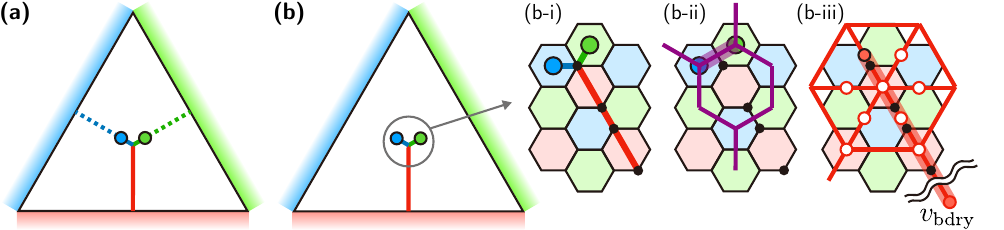}
    \caption{
    \textbf{Example of an error with weight $O(d/3)$ that is uncorrectable via the projection decoder but correctable via the concatenated MWPM decoder.}
    \subfig{a} The error is drawn as red, green, and blue solid lines, which makes a blue check and a green check marked as circles violated.
    The projection decoder pairs up these checks with the green/blue boundaries (dotted lines), thus it makes a logical error.
    \subfig{b} The same error is correctable via the concatenated MWPM decoder. 
    The microscopic picture of the vicinity of the violated checks is described in (b-i), where black dots indicate qubits with errors.
    In (b-ii) and (b-iii), MWPMs on the red-restricted and red-only lattices are schematized with the notations used in Fig.~\ref{fig:decoder_2D}, where only some of the lattice elements are visualized for simplicity.
    The outcome of the second-round MWPM, which is visualized as a red thick line ending at the boundary vertex $v_\mr{bdry}$, coincides with the error, thus it makes no logical error.
    }
    \label{fig:uncorrectable_error_projection_decoder}
\end{figure*}

In Fig.~\ref{fig:uncorrectable_error_projection_decoder}(a), we schematize an error with weight $O(d/3)$ that is uncorrectable via the projection decoder \cite{chamberland2020triangular,beverland2021cost,sahay2022decoder}.
The error is composed of a single-qubit error at the center of the triangular patch and a red string operator connecting the center and the red boundary, which is drawn as red, green, and blue solid lines.
This error makes a blue check and a green check (marked as circles) violated.
In the green-restricted (blue-restricted) lattice, the green (blue) check is paired up with the green (blue) boundary, thus the decoder makes a wrong prediction that causes a logical error.
See Appendix~A of Ref.~\cite{chamberland2020triangular} for an explicit example of such an error.

Notably, the above error is correctable via the concatenated MWPM decoder, as described in Fig.~\ref{fig:uncorrectable_error_projection_decoder}(b).
Roughly speaking, it is thanks to the strategy to select the least-weight one among the three predictions $\qty{\wtilde{V}_\mr{pred}^{(\rbs)}, \wtilde{V}_\mr{pred}^{(\gbs)}, \wtilde{V}_\mr{pred}^{(\bbs)}}$, which are obtained from the sub-decoding procedures of red, green, and blue, respectively.
Namely, MWPM on the red-restricted and red-only lattices, which are microscopically described in (b-ii) and (b-iii), gives a correct prediction $\wtilde{V}_\mr{pred}^{(\rbs)}$.
Although other two predictions $\wtilde{V}_\mr{pred}^{(\gbs)}$ and $\wtilde{V}_\mr{pred}^{(\bbs)}$ make logical errors, their weights are $O(2d/3)$, thus $\wtilde{V}_\mr{pred}^{(\rbs)}$ is selected as the final outcome of the decoder.

\subsection{Small-weight errors uncorrectable via the concatenated MWPM decoder \label{subsec:uncorrectable_errors_concat}}

\begin{figure*}[!t]
    \centering
    \includegraphics[width=0.9\textwidth]{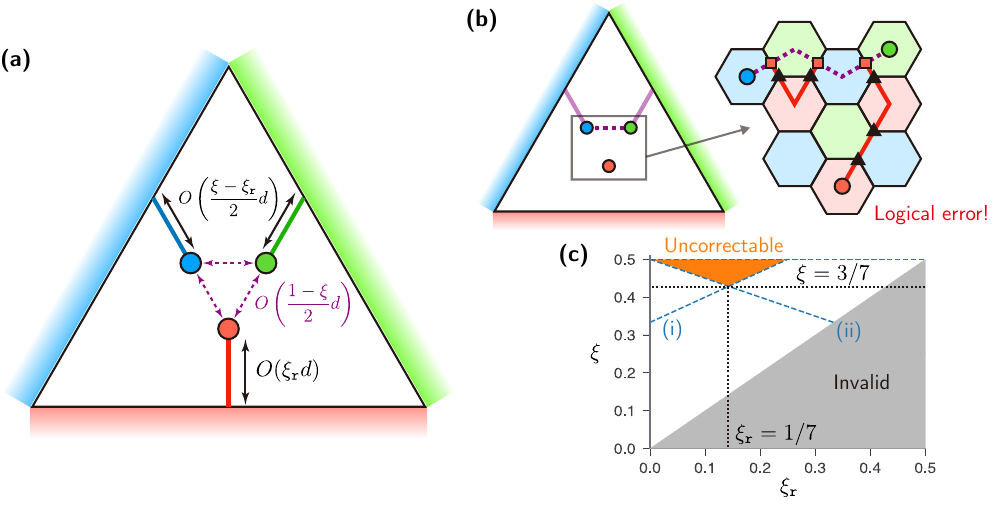}
    \caption{
        \textbf{Example of an error with weight $<d/2$ that may be uncorrectable via the concatenated MWPM decoder.}
        \subfig{a} The error has a weight of $w = O(\xi d)$ for $\xi < 1/2$ and is composed of three string operators (one for each color) terminating at the three boundaries, whose weights are respectively $w_\rbs = O(\xi_\rbs d)$, $w_\gbs = O([(\xi - \xi_\rbs)/2]d)$, and $w_\bbs = w - w_\rbs - w_\gbs = O([(\xi - \xi_\rbs)/2]d)$, where $\xi_\rbs \leq \xi$.
        They are arranged in a way that every pair of violated checks (say, green and blue) is separated by $w_\mr{gap} = O([(1-\xi)/2]d)$ edges of the third color (red).
        \subfig{b} If Eq.~\eqref{eq:uncorrectable_condition_red} holds, the first-round MWPM on the red-restricted lattice returns a wrong matching, which leads to a logical error.
        As an example, the microscopic picture near the violated checks is illustrated for the case of $w_\mr{gap} = 3$, where the purple dotted line indicates the matching of the first-round MWPM, the red squares are the corresponding red edges $E_\mr{pred}^{(\rbs)}$, the red solid lines are the matching of the second-round MWPM, and the black triangles indicate the final prediction.
        Similarly, decoding on green or blue sub-lattices incurs a logical error if Eq.~\eqref{eq:uncorrectable_condition_green} or~\eqref{eq:uncorrectable_condition_blue} holds.
        The decoding eventually fails only when all of Eqs.~\eqref{eq:uncorrectable_condition_red}--\eqref{eq:uncorrectable_condition_blue} are satisfied.
        \subfig{c} The region of $\xi$ and $\xi_\rbs$ that give uncorrectable errors is visualized.
        The blue dashed lines (i) and (ii) respectively indicate the boundaries of the regions where Eqs.~\eqref{eq:uncorrectable_condition_red} and \eqref{eq:uncorrectable_condition_green}/\eqref{eq:uncorrectable_condition_blue} hold.
        The gray region corresponds to $\xi_\rbs > \xi$, which is invalid.
    }
    \label{fig:uncorrectable_error_concatenated_decoder}
\end{figure*}

Alike the Möbius MWPM decoder \cite{sahay2022decoder}, the concatenated MWPM decoder fails to overcome the limitation that $O(3d/7)$-weight errors may not be correctable.
We consider an error with weight $w = O(\xi d)$ for $\xi < 1/2$ as presented in Fig.~\ref{fig:uncorrectable_error_concatenated_decoder}(a), which is composed of three string operators (one for each color) that terminate at the corresponding boundaries and make one check violated for each color.
These string operators have the weights of $w_\rbs = O(\xi_\rbs d)$, $w_\gbs = O([(\xi - \xi_\rbs)/2]d)$, and $w_\bbs = w - w_\rbs - w_\gbs = O([(\xi - \xi_\rbs)/2]d)$, where $\xi_\rbs \leq \xi$.
Additionally, they are arranged in a way that each pair of violated checks (say, green and blue) are separated by $w_\mr{gap} = O([(1-\xi)/2]d)$ edges of the third color (red); that is, to move from the violated green check to the blue check, we need to pass at least $w_\mr{gap}$ red edges.
(Note that a non-trivial string-net operator constructed by moving the violated blue and green checks to the violated red check and fusing them has a weight of $w + 2w_\mr{gap} = O(d)$.)
When this error occurs, we face a problem that a first-round MWPM of a sub-decoding procedure can be wrong.
For example, as shown in Fig.~\ref{fig:uncorrectable_error_concatenated_decoder}(b), the MWPM on the red-restricted lattice, which involves only the violated green and blue checks, gives a wrong matching connecting these checks (drawn as a dotted purple line) if and only if the distance between them ($w_\mr{gap}$) is smaller than the sum of their distances to the corresponding boundaries ($w_\gbs + w_\bbs$) within the red-restricted lattice, i.e.,
\begin{align}
    w_\mr{gap} = O\qty(\frac{1-\xi}{2}d) < w_\gbs + w_\bbs = O\qty(\qty(\xi - \xi_\rbs)d).
    \label{eq:uncorrectable_condition_red}
\end{align}
Then the subsequent MWPM on the red-only lattice produces predictions represented by black triangles in Fig.~\ref{fig:uncorrectable_error_concatenated_decoder}(b), leading to a logical error.
Similarly, the green sub-decoding fails if and only if
\begin{align}
    w_\mr{gap} = O\qty(\frac{1-\xi}{2}d) < w_\rbs + w_\bbs = O\qty(\frac{\xi + \xi_\rbs}{2}d),
    \label{eq:uncorrectable_condition_green}
\end{align}
and lastly, the blue sub-decoding fails if and only if
\begin{align}
    w_\mr{gap} = O\qty(\frac{1-\xi}{2}d) < w_\rbs + w_\gbs = O\qty(\frac{\xi + \xi_\rbs}{2}d).
    \label{eq:uncorrectable_condition_blue}
\end{align}
Since we select the least-weight correction among the three colors, the decoding eventually fails only when all of Eqs.~\eqref{eq:uncorrectable_condition_red}--\eqref{eq:uncorrectable_condition_blue} hold.
The region of such uncorrectable errors is visualized in Fig.~\ref{fig:uncorrectable_error_concatenated_decoder}(c) on the plane with the axes of $\xi_\rbs$ and $\xi$, while neglecting constant factors.
The minimum of $\xi$ in this region is $3/7$, which corresponds to $\xi_\rbs = 1/7$; thus, there may exist errors with weights $\geq O(3d/7)$ but smaller than $d/2$ that are uncorrectable via the concatenated MWPM decoder.

\begin{figure*}[!t]
	\centering
	\includegraphics[width=0.9\textwidth]{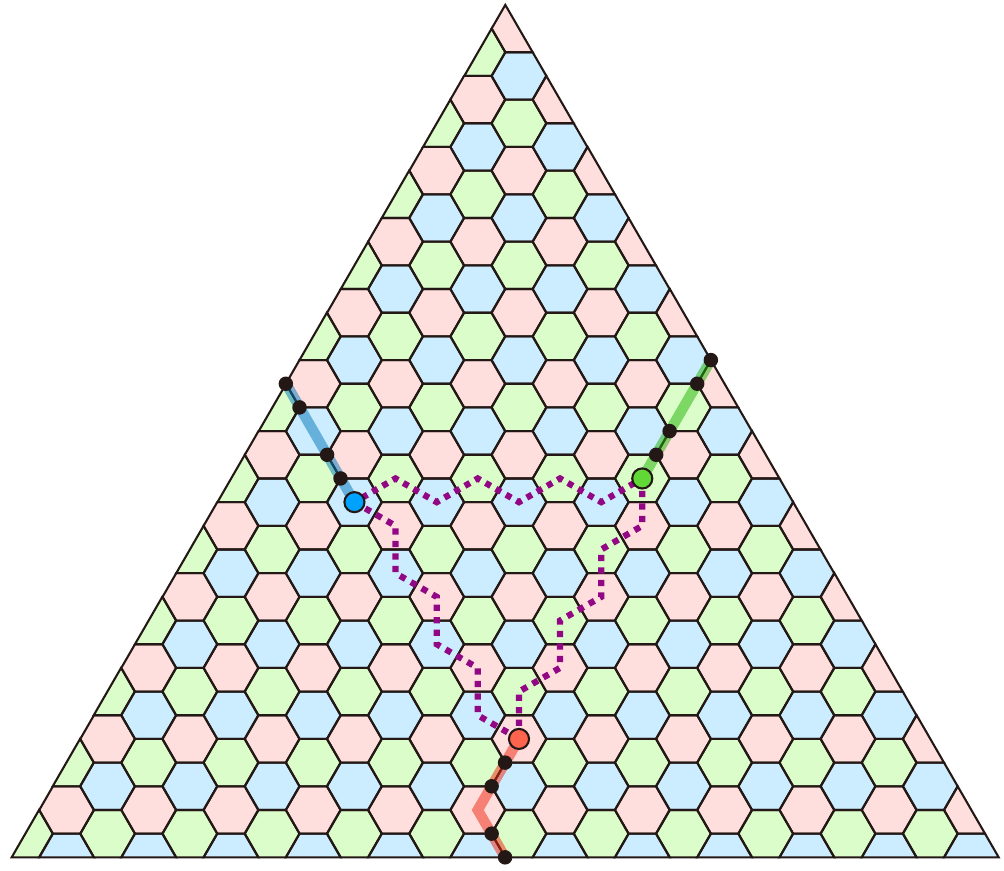}
	\caption{
     \textbf{Smallest example of an uncorrectable error shown in Fig.~\ref{fig:uncorrectable_error_concatenated_decoder}.}
     The error, marked as black dots, lies on the code of $d=25$, which is the smallest possible code distance permitting such errors, and has the weight of $w = 12 < d/2$.
     Three string operators (colored thick lines) constituting the error have the same weight of $w_\rbs = w_\gbs = w_\bbs = 4$ and each pair of violated checks (colored circles) is separated by $w_\mr{gap} = 7$ edges.
     The concatenated MWPM decoder cannot correct this error since it satisfies all of Eqs.~\eqref{eq:uncorrectable_condition_red}--\eqref{eq:uncorrectable_condition_blue}.
 }
	\label{fig:uncorrectable_error_example}
\end{figure*}

We exhaustively search for errors of the above type while increasing $d$ under the following constraints:
(i) The red string operator terminates exactly at the center qubit of the red boundary, that is, $(d - 1)/2$ qubits are placed on each of the left and right sides of the center qubit along the red boundary.
(ii) The weights of the blue and green string operators differ by up to two.
(iii) The value of $w_\mr{gap}$ is the same for every pair of the violated checks.
We find that $d=25$ is the smallest code distance that permits such uncorrectable errors, as shown in the example visualized in Fig.~\ref{fig:uncorrectable_error_example}, where $d=25$, $w = 12$, $w_\rbs = w_\gbs = w_\bbs = 4$, and $w_\mr{gap} = 7$.

\section{Additional numerical analyses \label{app:more_numerical_analyses}}

In this appendix, we present the results of additional numerical analyses.

\begin{figure}[!t]
	\centering
	\includegraphics[width=0.7\linewidth]{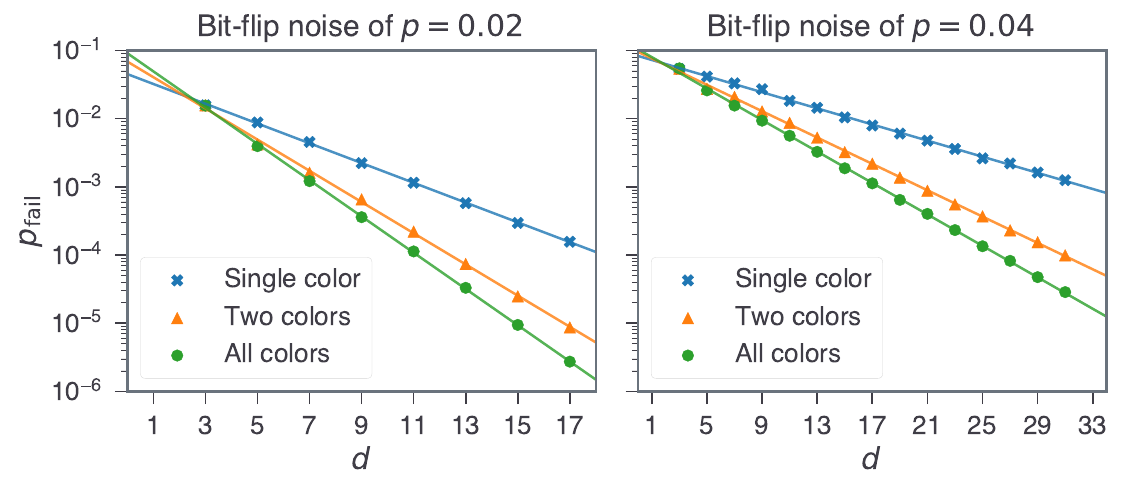}
	\caption{
    \textbf{Comparison of different color-selecting strategies.}
    We consider three strategies for the color-selecting step of the concatenated MWPM decoder: using only the outcome of the sub-decoding procedure of a single color (\rbs) without a comparison, comparing the outcomes for two colors (\rbs, \gbs), and comparing the outcomes for all the three colors (which is the original strategy).
    The failure rates $p_\mr{fail}$ under the bit-flip noise models of $p = 0.02$ (left) and $p = 0.04$ (right) are plotted against $d$ for these three strategies.
    The solid lines indicate the linear regressions of $\log p_\mr{fail}$ against $d$: $\log p_\mr{fail} = -0.33d - 3.1$ ($p=0.02$, single color), $-0.53d - 2.7$ ($p=0.02$, two colors), $-0.61d - 2.4$ ($p=0.02$, all colors), $-0.14d - 2.5$ ($p=0.04$, single color), $-0.22d - 2.4$ ($p=0.04$, two colors), and $-0.27d - 2.3$ ($p=0.04$, all colors), respectively.
    }
	\label{fig:color_strategy_comparison}
\end{figure}

Figure~\ref{fig:color_strategy_comparison} compares our color-selecting strategy (i.e., executing the sub-decoding procedures for all the three colors and selecting the best one) with two other strategies of performing only one or two sub-decoding procedures.
It plots logical failure rates under the bit-flip noise model of $p \in \qty{0.02, 0.04}$ for these three strategies, confirming that our original strategy of using all the three colors significantly outperforms the others.

\begin{figure*}[!t]
	\centering
	\includegraphics[width=0.82\textwidth]{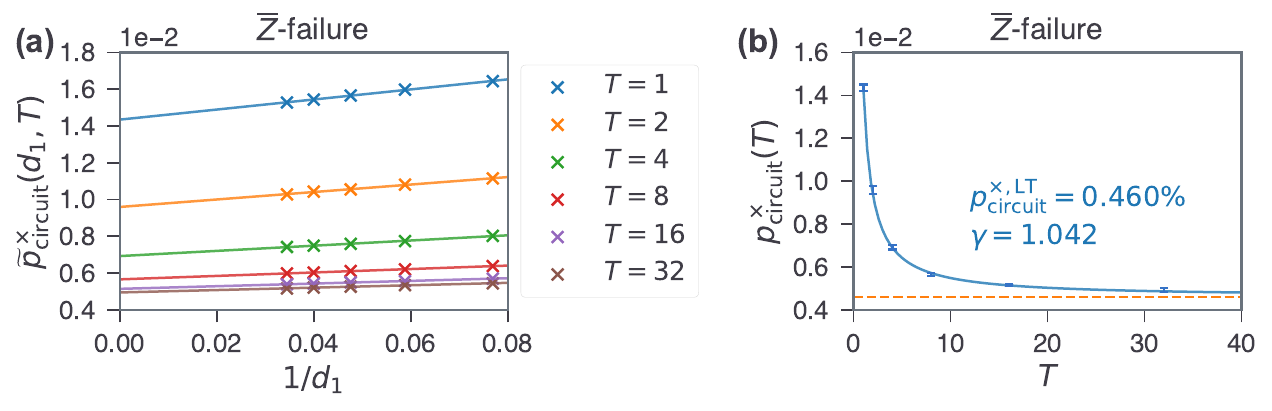}
     \includegraphics[width=0.82\textwidth]{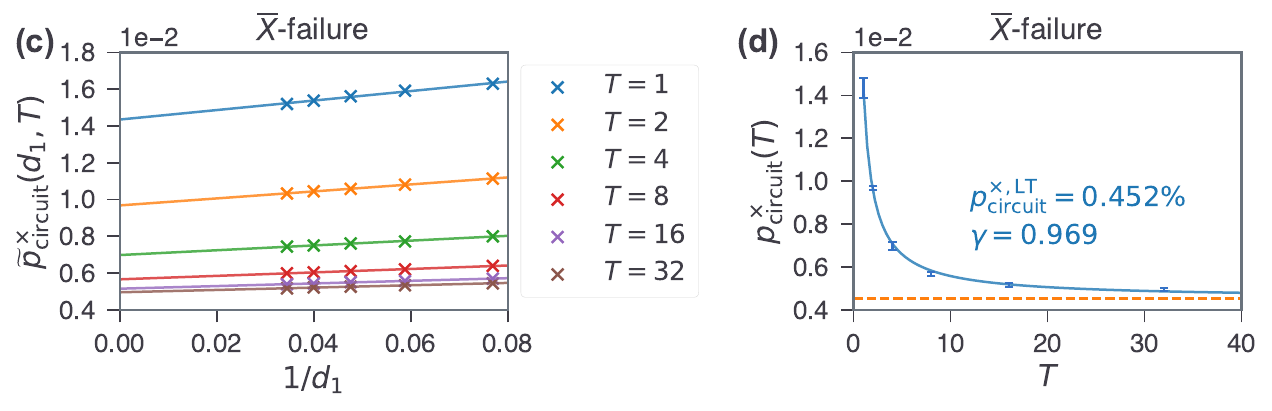}
	\caption{
    \textbf{Separate analyses for the $\ov{Z}$- and $\ov{X}$-failure of the decoder under near-threshold circuit-level noise.}
    For the $\ov{Z}$-failure, \subfig{a} and \subfig{b} are the counterparts of Figs.~\ref{fig:circuit_numerical_analysis_near_thrs}(b) and~(c), respectively.
    Similarly, \subfig{c} and \subfig{d} are their counterparts for the $\ov{X}$-failure.
    The long-term cross thresholds $\CrossThrsLT{circuit}$ are estimated as 0.460\% for the $\ov{Z}$-failure and 0.452\% for the $\ov{X}$-failure, which differ by about 1.8\%.
    }
	\label{fig:near_threshold_paulis}
\end{figure*}

Figure~\ref{fig:near_threshold_paulis} presents separate analyses for the $\ov{Z}$- and $\ov{X}$-failure of the decoder under near-threshold circuit-level noise, which are analogous to Figs.~\ref{fig:circuit_numerical_analysis_near_thrs}(b) and~(c).
The long-term cross thresholds $\CrossThrsLT{circuit}$ are estimated as 0.460\% for the $\ov{Z}$-failure and 0.452\% for the $\ov{X}$-failure, which differ by about 1.8\% (0.008\%p).
Note that $\CrossThrsLT{circuit}$ for the overall failure rate is about 0.455\%, which is the average of these two.

\begin{figure}[!t]
	\centering
	\includegraphics[width=0.48\linewidth]{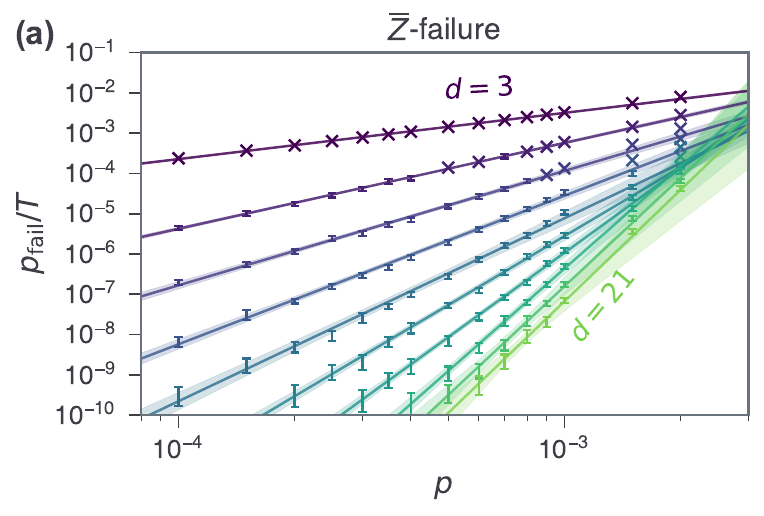}
    \includegraphics[width=0.48\linewidth]{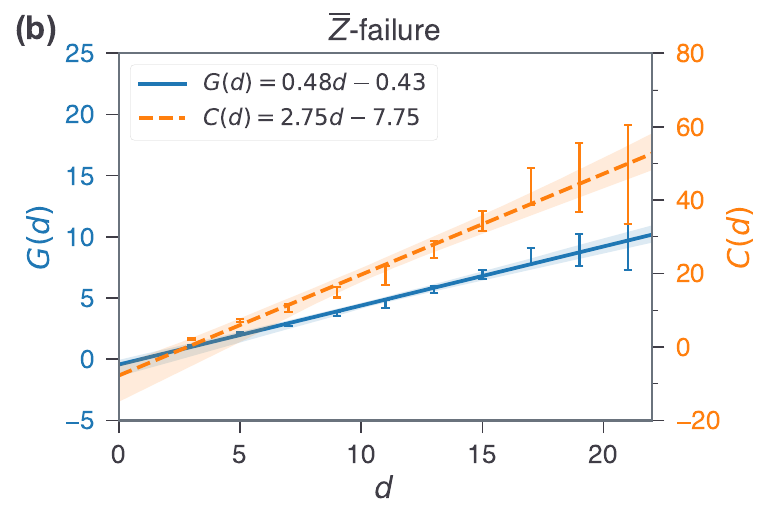}
    \includegraphics[width=0.48\linewidth]{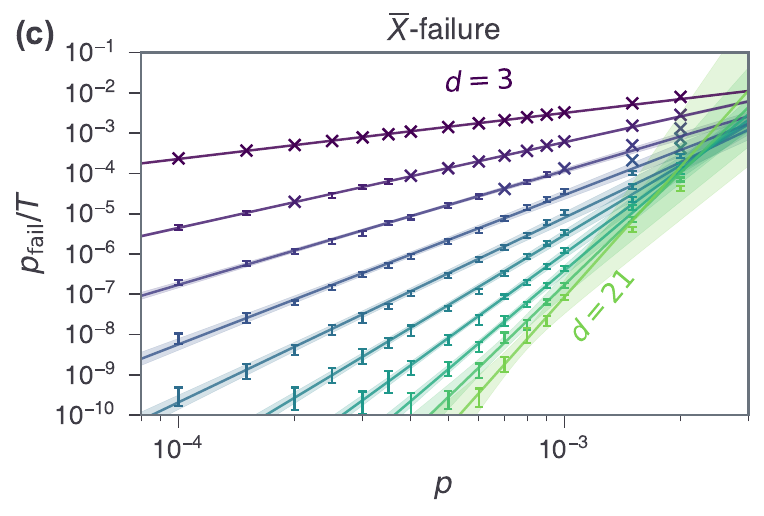}
    \includegraphics[width=0.48\linewidth]{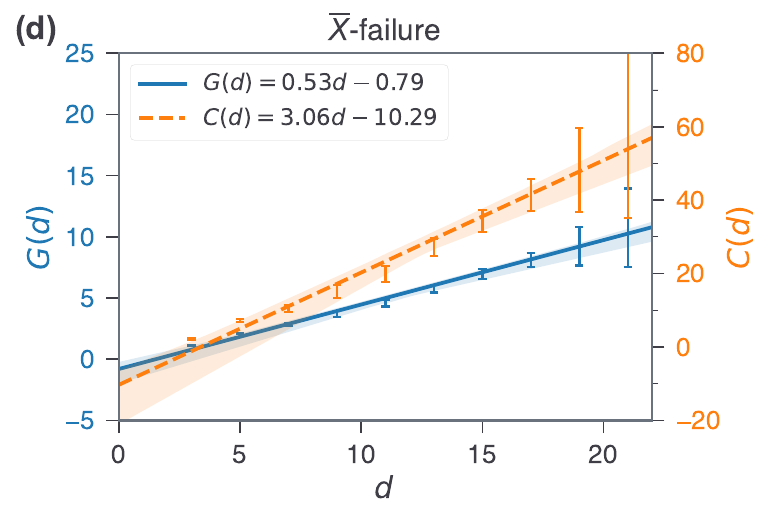}
	\caption{
    \textbf{Separate analyses for the $\ov{Z}$- and $\ov{X}$-failure of the decoder under sub-threshold circuit-level noise.}
    For the $\ov{Z}$-failure, \subfig{a} and \subfig{b} are the counterparts of Figs.~\ref{fig:pfails_low_errors_circuit}(a) and~(b), respectively.
    Similarly, \subfig{c} and \subfig{d} are their counterparts for the $\ov{X}$-failure.
    }
    \label{fig:sub_threshold_paulis}
\end{figure}

Figure~\ref{fig:sub_threshold_paulis} presents separate analyses for the $\ov{Z}$- and $\ov{X}$-failure of the decoder under sub-threshold circuit-level noise, which are analogous to Fig.~\ref{fig:pfails_low_errors_circuit}.
For the $\ov{Z}$-failure, the regressions of $G(d)$ and $C(d)$ against $d$ and the corresponding ansatz parameters are
\begin{align*}
    &G(d) = (0.48 \pm 0.07) (d - 12) + (5.4 \pm 0.4), \quad C(d) =(2.7 \pm 0.5) (d - 12) + (25 \pm 3), \\ 
    &\ScalingThrs{circuit} \approx 3.4 \times 10^{-3}, \quad \alpha \approx 5.0 \times 10^{-3}, \quad \beta \approx 0.48, \quad \eta \approx 5.4, \quad d_0 = 12.
\end{align*}
For the $\ov{X}$-failure, they are
\begin{align*}
    &G(d) = (0.53 \pm 0.05) (d - 12) + (5.5 \pm 0.3), \quad C(d) = (3.1 \pm 0.5) (d - 12) + (26 \pm 3), \\ 
    &\ScalingThrs{circuit} \approx 3.0 \times 10^{-3}, \quad \alpha \approx 3.4 \times 10^{-3}, \quad \beta \approx 0.53, \quad \eta \approx 5.5, \quad d_0 = 12.
\end{align*}

\begin{figure}[!t]
	\centering
	\includegraphics[width=0.8\linewidth]{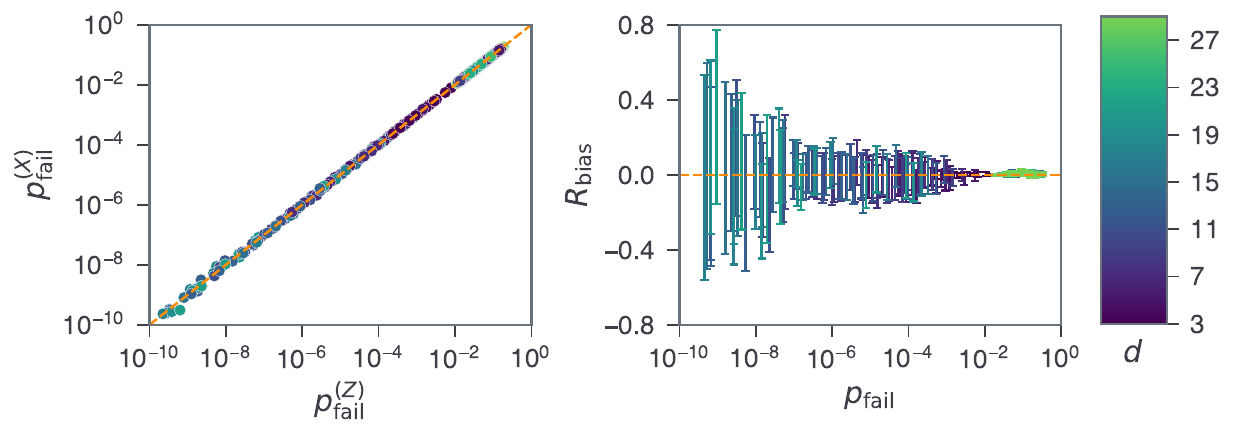}
	\caption{
    \textbf{Analysis of bias between $\ov{Z}$- and $\ov{X}$-failure under circuit-level noise.}
    The left figure shows the scatter plot of the $\ov{Z}$-failure rate $p_\mr{fail}^{(Z)}$ and the $\ov{X}$-failure rate $p_\mr{fail}^{(X)}$ for all the simulation outcomes in Figs.~\ref{fig:circuit_numerical_analysis_near_thrs} and~\ref{fig:pfails_low_errors_circuit}.
    The right figure shows the scatter plot of the logical failure rate $p_\mr{fail}$ and the bias $R_\mathrm{bias}$ defined in Eq.~\eqref{eq:bias_coeff} for all the simulation outcomes in Figs.~\ref{fig:circuit_numerical_analysis_near_thrs} and~\ref{fig:pfails_low_errors_circuit}.
    The error bars indicate the 99\% confidence intervals of $R_\mathrm{bias}$.
    For both figures, the dots and error bars are colored depending on the code distance $d$ (as shown in the right color bar) and the orange dashed lines represent unbiased cases of $p_\mr{fail}^{(Z)} = p_\mr{fail}^{(X)}$.
    }
	\label{fig:bias_scatter}
\end{figure}

Figure~\ref{fig:bias_scatter} analyzes the bias between $\ov{Z}$- and $\ov{X}$-failure in all the circuit-level simulation outcomes of Fig.~\ref{fig:circuit_numerical_analysis_near_thrs} and~\ref{fig:pfails_low_errors_circuit}.
It shows two scatter plots: one for the $\ov{Z}$-failure rate $p_\mr{fail}^{(Z)}$ and the $\ov{X}$-failure rate $p_\mr{fail}^{(X)}$ and the other one for the total failure rate $p_\mr{fail}$ and the bias $R_\mr{bias}$ defined in Eq.~\eqref{eq:bias_coeff}.
We observe that the 99\% confidence intervals of $R_\mr{bias}$ almost always include zero, which implies that we cannot reject the null hypothesis that there is no bias.


\end{document}